\numberwithin{equation}{section}
\theoremstyle{plain}
\newcommand{\beq}[1]{\begin{equation} \label{#1}}
\newcommand{\eeq}{\end{equation}}
\newcommand{\bed}{\begin{displaymath}}
\newcommand{\eed}{\end{displaymath}}
\newcommand{\ben}{\begin{eqnarray*}}
\newcommand{\een}{\end{eqnarray*}}
\def\cd{(\cdot)}
\newcommand{\wdt}{\widetilde}
\newtheorem{thm}{Theorem}[section]
\newtheorem{prop}[thm]{Proposition}
\newtheorem{lem}[thm]{Lemma}
\newtheorem{cor}[thm]{Corollary}
\theoremstyle{definition}
\newtheorem{rem}[thm]{Remark}
\newcommand{\bedd}{\bed\begin{array}{l}}
\newcommand{\eedd}{\end{array}\eed}
\def\one{{\hbox{1{\kern -0.35em}1}}}
\newcommand{\bea}{\bed\begin{array}{rl}}
\newcommand{\eea}{\end{array}\eed}
\newcommand{\barray}{\begin{array}{ll}}
\newcommand{\earray}{\end{array}}
\begin{document}

\title[]{Optimal Dividend Strategy for an Insurance Group with Contagious Default Risk}

\author[  ]{Zhuo Jin}
\address[Z. Jin]{Centre for Actuarial Studies, Department of Economics, The University of Melbourne, VIC3010, Australia.}
\email{zhuo.jin@unimelb.edu.au}

\author[  ]{Huafu Liao}
\address[H. Liao]{School of Mathematical Sciences, University of Science and Technology of China, Hefei, Anhui, 230026, China.}
\email{lhflhf@mail.ustc.edu.cn}

\author[  ]{Yue Yang}
\address[Y. Yang]{Department of Applied Mathematics, The Hong Kong Polytechnic University, Hung Hom, Kowloon, Hong Kong.}
\email{yue.yy.yang@connect.polyu.hk}

\author[  ]{Xiang Yu}
\address[X. Yu]{Department of Applied Mathematics, The Hong Kong Polytechnic University, Hung Hom, Kowloon, Hong Kong.}
\email{xiang.yu@polyu.edu.h}

\maketitle

\begin{abstract}
This paper studies the optimal dividend for a multi-line insurance group, in which each subsidiary runs a product line and is exposed to some external credit risk. The default contagion is considered such that one default event may increase the default probabilities of all surviving subsidiaries. The total dividend problem for the insurance group is investigated and we find that the optimal dividend strategy is still of the barrier type. Furthermore, we show that the optimal barrier of each subsidiary is modulated by the default state. That is, how many and which subsidiaries have defaulted will determine the dividend threshold of each surviving subsidiary. These conclusions are based on the analysis of the associated recursive system of \textit{Hamilton-Jacobi-Bellman} variational inequalities (HJBVIs). The existence of the classical solution is established and the verification theorem is proved. In the case of two subsidiaries, the value function and optimal barriers are given in analytical forms, allowing us to conclude that the optimal barrier of one subsidiary decreases if the other subsidiary defaults.
\vspace{0.4 cm}

\noindent{\textbf{Keywords}:} Insurance group, credit default contagion, optimal dividend, default-state-modulated barriers, recursive system of HJBVIs\\
\end{abstract}

\ \\

\section{Introduction}\label{sec:intro}

Dividend payment is always a focused issue in insurance and corporate finance, which is regarded as an important signal of the company's future growth opportunities and has direct impact on the wealth of shareholders. Meanwhile, insurance companies also dynamically invest money in the financial market in order to pay future claims. The pioneer work \cite{Finetti} solves the optimal dividend problem up to the financial ruin time when the surplus process follows a simple random walk. Later, vast research has been devoted to finding optimal dividend strategies in various discrete and continuous time risk models, see a short list of related work by  \cite{Gerber72, AsmussenT, HojgaardT99, AzcueM05, AvramPP07, Schmidli, LoeffenR10, ChevalierVS13, Kaz3, Kaz2, Kaz, Perez} and references therein. We refer to \cite{AlbrecherT09} and \cite{Avanzi09} for some comprehensive surveys on the topic of dividend optimization.

The present paper has a particular interest in a multi-line insurance group, which is a parent insurer consisting of multiple subsidiaries in the market where each subsidiary runs a product line such as life insurance, auto insurance, income protection insurance, housing insurance and etc. Each product line is subject to bankruptcy separately and has its own premiums and losses with very distinctive claim frequency, which motivates some recent academic studies on multi-line insurance business. In a multi-line insurance group framework, the insurance pricing model by line is studied by \cite{PhillipsCA98}. The capital allocation strategy for a multi-line insurance company is investigated by \cite{MyersR01}, which reveals that allocations depend on the uncertainty of each line's losses and the marginal contribution of each line. Under the assumption that losses from all product lines follow a sharing rule, some premiums problems are examined by \cite{IbragimovJW10}.

What is missing in the literature is the investigation of external systemic risk for the insurance group. Our work enriches the study of the insurance group by considering the group dividend optimization problem in which each subsidiary may go default due to some contagious default risk. In practice, many subsidiaries share the same reserves pool from the parent group company. It is reasonable to assume that all subsidiaries are exposed to some common credit risk. Our model can depict some real life situations that the group manager collects cash reserves from different subsidiaries and invests them into some financial credit instruments such as defaultable Bonds, CDS, equity default swaps and etc. The insolvency and termination of one subsidiary business caused by the market credit risk may quickly spread to all other subsidiaries if they share the same underlying credit assets. Some empirical studies find that defaults are indeed contagious in certain cases and exhibit the so-called default-clustering phenomenon, see \cite{DasDKS}. In particular, a dependent credit risk model is studied by \cite{TakadaS}, which analyzes the contagious defaults affected by a common macroeconomic factor. A financial network model is later developed by \cite{AminiM}, in which the contagious defaults are caused by a macroeconomic shock. In the context of insurance, it is also reasonable to consider the investment of net-reserves in some credit assets and the default risk in the financial market may lead to some massive domino effects in surplus management and subsidiaries operations.

It is worth noting that some recent work such as \cite{AlbrecherAM}, \cite{GuSZ} and \cite{Grandits} consider the collaborating dividend problem between multiple insurance companies, in which the credit default and default contagion are again not concerned. Instead, they consider some independent insurance companies and assume that one insurance company can inject capital into other companies whenever their financial ruins occur. The optimal dividend for two collaborating insurance companies in compound Poisson and diffusion models are studied by \cite{AlbrecherAM} and \cite{GuSZ} respectively. The extension to different solvency criteria is considered later by \cite{Grandits}. Although these work differ substantially from the present paper,  we confront similar challenges from the multi-dimensional singular control problem and some new mathematical methods are required.

To ensure the tractability, we work in the interacting intensity framework to model default contagion, which allows sequential defaults and assumes that the credit default of one subsidiary can affect other surviving names by increasing their default intensities. This type of default contagion has been actively studied recently in the context of portfolio management, see among \cite{B3, B2, B1, BLY2, BLY1} and many others. The key observation in these work is that the system of HJB partial differential equations (PDEs) is recursive and the depth of the recursion equals the number of risky assets. The system of PDEs can therefore be analyzed using a backward recursion from the state in which all assets are defaulted towards the state that all assets are alive. As opposed to portfolio optimization, we confront a singular control problem that stems from the dividend payment, and we consequently need to handle variational inequalities instead of PDE problems. To the best of our knowledge, our work appears as the first one attempting to introduce the default contagion to the insurance group dividend control framework. In particular, we distinguish the ruin caused by insurance claims (i.e. the surplus process diffuses to zero) and the termination caused by credit default jump. It is observed in this paper that the optimal group dividend is of the barrier type and the optimal barrier for each subsidiary is default-state-modulated, i.e., the optimal barrier of each surviving subsidiary will be adjusted whenever some subsidiaries go default. In the simple case of two subsidiaries, we can rigorously prove that the group manager lowers the dividend barrier of the surviving subsidiary and forces it to pay dividend soon, see Corollary \ref{change-m}.

Our mathematical contribution is the study of the recursive system of HJBVIs \eqref{original HJB1}, which differs from some conventional PDE problems in portfolio optimization. We adopt the core idea in \cite{B3, B2, B1, BLY1} and follow the backward recursion based on the number of defaulted subsidiaries. In addition, we take the full advantage of the risk neutral valuation of the group control and simplify the multi-dimensional value function into a separation form. Our arguments can be outlined as follows. Firstly, we start from the case when there is only one surviving subsidiary and work inductively to the case when all subsidiaries are alive. The classical solution in the step with $k$ surviving subsidiaries will appear as variable coefficients in the step with $k+1$ surviving subsidiaries, and we can continue to show the existence of classical solution with $k+1$ names. Secondly, to show the existence of classical solution in each step with a fixed number of subsidiaries, we conjecture a separation form of the value function, and split the variational inequality from the group control into a subsystem of auxiliary variational inequalities. To tackle each auxiliary variational inequality, we first obtain the existence of a classical solution to the ODE problem. By applying the smooth-fit principle, we deduce the existence of a free boundary point depending on the default state and construct the desired classical solution to the auxiliary variational inequality. The rigorous proof of the verification theorem is provided to show that the value function coincides with the classical solution to the recursive system of HJBVIs \eqref{original HJB1}. As a byproduct, the optimal dividend is proved to be a reflection strategy with the barrier depending on the default state indicator process, see \eqref{optim divi} in Theorem \ref{mainthm1}.

The rest of the paper is organized as follows. Section \ref{sec:form} introduces the model of the multi-line insurance group with external credit default contagion. The optimal group dividend problem for all subsidiaries is formulated and the main theorem is presented therein. In Section \ref{2companies}, we derive the HJBVI \eqref{original HJB} for two subsidiaries and solve the value function in an explicit manner. The optimal barriers of the dividend are constructed using the smooth-fit principle. Section \ref{Multi comp} generalizes the results to a multi-line insurance group. The proof of the verification theorem is given in Section \ref{sec:verf}.
%Section \ref{numexp} presents some numerical examples of two subsidiaries. %The conclusion is given in Section \ref{sec-con}.
The derivation of the HJBVI \eqref{original HJB} for two subsidiaries is reported in Appendix \ref{appx}.

\section{Model Formulation}\label{sec:form}
Let $(\Omega, \mathcal{F}, \mathbb{F},\mathbb{P})$ be a complete filtered probability space where $\mathbb{F}:=\{\mathcal{F}_t\}$ is a right-continuous, $\mathbb{P}$-completed filtration. We consider an insurance group consisting of $N$ subsidiary business units and each business unit is managed independently within the group. In particular, the decision maker in the present paper is the insurance group manager, who collects the premiums and contributes shares of the dividend for the whole group of subsidiaries.

After the pioneer work \cite{Ig}, the diffusion-approximation of the classical Cram\'er-Lundberg model has been popular in the study of optimal dividend and reinsurance thanks to its tractability and allowance of explicit control strategies, see among \cite{Em75}, \cite{Grand}, \cite{Asmu}, \cite{Choulli}, \cite{Gerber} and many others.  Following their setting, it is assumed in this paper that all subsidiaries have the same form of surplus processes with different drifts and insurance claim distributions and the pre-default surplus process $\hat{X}_i(t)$ for each subsidiary satisfies the diffusion model that
\begin{align}\label{eq:38}
d\hat{X}_i(t)=a_i dt-b_i dW_i(t),
\end{align}
where constants $a_i>0$ and $b_i>0$ represent the mean and the volatility of the surplus process respectively, and each $W_i(t)$ is a standard $\mathbb{P}$-Brownian motion. For $1\leq i,j\leq N$, the correlation coefficient between $W_i$ and $W_j$ is denoted by the constant $-1\leq \rho_{ij}\leq 1$ and the correlation coefficient matrix is denoted by $\Sigma=(\rho_{ij})_{N\times N}$. The model covers correlated insurance claims from different subsidiaries including possible scenarios that some subsidiaries are running product lines that depend on other product lines and some subsidiaries serve certain overlapping customers.

We consider in this paper that each subsidiary allocates a large proportion of its net-reserves in some credit assets. Each subsidiary is exposed to some external credit risk in the financial market, and a wave of defaults in these credit assets may lead to large loss of net-reserves in all subsidiaries. One example is the collapse of AIG, which is exposed to substantial credit risk in its balance sheet in the 2008 financial crisis. To make our multi-dimensional dividend control problem tractable and facilitate the backward induction method, we consider the extreme case in the present paper that the external default will terminate the operation of the subsidiary and no salvage value can be paid as dividend at the moment of default. To model these extreme and irreparable default events, we choose the so-called default indicator process that is described by an N-dimensional $\mathbb{F}$-adapted process $\mathbf{Z}(t)=(Z_1(t),\ldots, Z_N(t))$ taking values on $\{0,1\}^N$. For each $i$, $Z_i(t)=1$ indicates that the $i$-th subsidiary has defaulted up to time $t$, while $Z_i(t)=0$ indicates that the $i$-th subsidiary is still alive at time $t$. The process $\mathbf{Z}(t)$ is assumed to be independent of all Brownian motions $W_i(t)$, $i=1,\ldots,N$, to reflect that these external default events stem from the credit assets and they do not depend on the claims of each subsidiary's insurance products.

For each $i=1,\ldots, N$, the default time $\sigma_i$ for the $i$-th subsidiary is given by
\begin{align*}
\sigma_i:=\inf\left\{t\geq0;Z_i(t)=1\right\}.
\end{align*}
The stochastic intensity of $\sigma_i$ is modeled by $\left(1-Z_i(\cdot)\right)\lambda_i\left(\mathbf{Z}(\cdot)\right)$, where $\lambda_i$ maps $\{0,1\}^N$ to $(0,+\infty)$ and the process
\begin{align}\label{Z_imart}
M_i(t):=Z_i(t)-\int_0^{t\wedge\sigma_i}\lambda_i\left(\mathbf{Z}(s)\right)ds,
\end{align}
is a martingale with respect to the filtration generated by $\mathbf{Z}$. Note that this process $Z_i(t)$ can also be viewed as a Cox process truncated above by constant $1$, whose intensity process is $(1-Z_i(t))\lambda_i(\mathbf{Z}(t))+Z_i(t)$.

Let us take $N=2$ as an example and consider the default state $\mathbf{Z}(t)=(0,0)$ at time $t$. The values $\lambda_1(0,0)$ and $\lambda_2(0,0)$ give the default intensity of subsidiary $1$ and subsidiary $2$ at time $t$ respectively. Suppose that subsidiary $1$ has already defaulted before time $t$ and only subsidiary $2$ is alive, then $\lambda_2(1,0)$ represents the default intensity of subsidiary $2$ at time $t$. Similarly, if the subsidiary $2$ has already defaulted before time $t$ and only subsidiary $1$ is alive, then $\lambda_1(0,1)$ represents the default intensity of subsidiary $1$ at time $t$. Moreover, we consider the default contagion in the sense that $\lambda_1(0,0)\leq \lambda_1(0,1)$ and
$\lambda_2(0,0)\leq \lambda_2(1,0)$ such that the default intensity of one subsidiary increases after the other subsidiary defaults.

For the general case with $N$ subsidiaries, the default indicator process at time $t$ may jump from a state $\mathbf{Z}(t)=(Z_1(t),\ldots,Z_{i-1}(t),Z_{i}(t),Z_{i+1}(t),\ldots, Z_N)$ in which the subsidiary $i$ is alive ($Z_{i}(t)=0$) to the neighbour state $(Z_1(t),\ldots,Z_{i-1}(t),1-Z_{i}(t),Z_{i+1}(t),\ldots, Z_N)$ in which the subsidiary $i$ has defaulted with the stochastic rate $\lambda_i(\mathbf{Z}(t))$. It is assumed from this point on that $Z_i$, $i=1,\ldots, N$, will not jump simultaneously in the sense that
\begin{align}\label{simul_jump_assumption1}
\Delta Z_i(t)\Delta Z_j(t)=0,\quad 1\leq i<j\leq N,\quad t\geq0.
\end{align}
Note that the default intensity of the $i$-th subsidiary $\lambda_i(\mathbf{Z}(t))$ depends on the whole vector process $\mathbf{Z}(t)$, and it is assumed that $\lambda_i(\mathbf{Z}(t))$ increases if any other subsidiary defaults. This is what we mean by default contagion for multiple subsidiaries. Let us denote the vector $\lambda(\mathbf{z}) = (\lambda_i(\mathbf{z}); i=1,\ldots, N )^T$, for the given default vector $\mathbf{z}\in\{0,1\}^N$.

The actual surplus process of subsidiary $i$ after the incorporation of external credit risk is denoted by $\wdt X_i(t)$, where $i=1, 2, \ldots, N,$ and it is defined as
\begin{align}\label{eq:39}
\wdt X_i(t):=\left(1-Z_i(t)\right)\hat{X}_i(t).
\end{align}

Given the surplus process $\wdt X_i(t)$, for each subsidiary $i$, we can then introduce the dividend policy. A dividend strategy $D_i\cd$ is an $\mathcal{F}_t$-adapted
  process representing the accumulated
  amount of dividend paid up to time $t$. That is,
  $D_i(t)$ is a nonnegative and nondecreasing
  stochastic process that is right continuous and have
  left limits with $D_i(0^-)=0$. The jump size of
$D_i$ at time $t\ge 0$ is denoted by $\Delta D_i(t):= D_i(t)-D_i(t^-) $, and
$D_i^c(t) := D_i(t)-\sum_{0\le s \le t} \Delta D_i(s)$ denotes the continuous part of
$D_i(t)$.

For the $i$-th subsidiary, the resulting surplus process in the presence of dividend payments can be written as
\beq{sur_2}
X_i(t):=\big(1-Z_i(t)\big)\big(\wdt X_i(t)-D_i(t)\big), \ X_i(0)=x_i\geq 0,
\eeq
where $x_i$ stands for the initial surplus of the $i$-th subsidiary. We denote the vector process $\mathbf{X}(t):=(X_1(t), \ldots, X_N(t))$.

The objective function for the insurance group is formulated as a corporative singular control of total dividend strategy $\mathbf{D}(t)=(D_1(t),\ldots,D_N(t))$ under the expected value of discounted future dividend payments up to the ruin time
\beq{per_2}
J(\mathbf{x},\mathbf{z},\mathbf{D}\cd):={\mathbb E}\left(\sum^N_{i=1}\alpha_i\int_0^{\tau_i}e^{-rt} dD_i(t)\right),
\eeq
where the weight parameter satisfies $\alpha_1+\alpha_2+\ldots +\alpha_N=1$. The parameter $\alpha_i$ represents the relative weight of the subsidiary in the insurance group, and they add up to $1$ after scaling. $r>0$ is a given discount rate.  Recall that the insurance group manager is the decision maker, the surplus process of each subsidiary is therefore completely observable to the decision maker. The ruin time $\tau_i$ of the subsidiary $i$ is defined by
\begin{align*}
\quad\tau_i:=\inf\{t\geq0: X_i(t)=0\},\ \  i=1,\ldots, N.
\end{align*}
The initial surplus level is denoted by $X_i(0)=x_i$ and the initial default state is denoted by $Z_i(0)=z_i$, $i=1,\ldots, N$. We also denote
$\mathbf{X}(0)=\mathbf{x}:=(x_1,\ldots, x_N)$ and $\mathbf{Z}(0)=\mathbf{z}:=(z_1,\ldots,z_N)$. It is assumed henceforth that each admissible control process $D_i(t)$ can not jump simultaneously with $Z_i(t)$ in the sense that, for $t\geq 0$,
\begin{align}\label{simul_jump_assumption2}
  \Delta D_i(t)\Delta Z_i(t)=0, \ 1\leq i\leq N.
\end{align}
That is, the dividend for the subsidiary $i$ can not be paid right at the moment when the subsidiary $i$ goes default due to external credit risk. The assumption \eqref{simul_jump_assumption2} is by no means restrictive because the process $D_i(t)$ is c\`{a}dl\`{a}g and the default time $\sigma_i$ is totally inaccessible due to the existence of default intensity $\lambda_i$. In Appendix \ref{appx}, assumptions \eqref{simul_jump_assumption1} and \eqref{simul_jump_assumption2} are needed to derive the associated HJBVI. Moreover, it is assumed throughout the paper that $\Delta D_i(t)\leq X_i(t-)$ and $D_i(t)=D_i(t\wedge\tau_i)$, where the first condition dictates that the subsidiary $i$ can not pay dividend more than its currently available fund and the second condition means that the subsidiary $i$ won't pay any dividend after its ruin time.

Our goal is to find the optimal dividend strategy $\mathbf{D}^*$ such that the value function can be attained that
\begin{align}\label{valuedivd}
f(\mathbf{x},\mathbf{z}):=\sup_{D} J(\mathbf{x},\mathbf{z},\mathbf{D})=J(\mathbf{x},\mathbf{z},\mathbf{D}^*).
\end{align}
In particular, we are interested in the case that all subsidiaries are alive at the initial time, i.e., the value function $f(\mathbf{x},\mathbf{0})$ can be characterized, where $\mathbf{0}=(0,\ldots, 0)$ is the zero vector.

A barrier dividend strategy is to pay dividend whenever the surplus process excesses over the barrier. The optimal dividend for a single insurance company has been shown to fit this type of barrier control in various risk models. In our setting with default contagion, the optimal dividend for the insurance group also fits this barrier control. Nevertheless, the optimal barrier for each subsidiary is no longer a fixed level as in the model of a single insurance company. Instead, we identify that the optimal barrier is dynamically modulated by the defaulted subsidiaries and surviving ones. The dependence on the default state leads to some distinctive phenomena that the dividend barrier will be adjusted in the observation of sequential defaults. Furthermore, the change of the barrier for subsidiary $i$, i.e. the change of $m_i(\mathbf{Z}(t))$ in \eqref{optim divi}, is complicated and depends on all market parameters. In the case of two subsidiaries, we can prove in Corollary \ref{change-m} that the default event of one subsidiary will stimulate the surviving one to pay dividend, albeit with less amount, because the dividend threshold decreases.

For any vectors $\mathbf{x}\in[0,+\infty)^N$ and $\mathbf{z}\in\{0,1\}^N$, let us denote
\begin{align}
  \mathbf{x}^{(l)}:=(x_1,\ldots,x_{l-1},0,x_{l+1},\ldots,x_N),\quad\mathbf{z}^l:=(z_1,\ldots,z_{l-1},1,z_{l+1},\ldots,z_N).
\end{align}
\ \\
The next theorem is the main result of this paper.

\begin{thm}\label{mainthm1}
Let us consider the initial surplus level $\mathbf{X}(0)=\mathbf{x}\in[0,+\infty)^N$ and the initial default state $\mathbf{Z}(0)=\mathbf{z}:=(z_1,\ldots,z_N)=\mathbf{0}$ that all subsidiaries are alive at the initial time. The value function $f(\mathbf{x},\mathbf{0})$ defined in \eqref{valuedivd} is the unique classical solution to the variational inequalities
\begin{align}\label{varineqNc}
\max_{1\leq i\leq N}\left\{\mathcal{L}f(\mathbf{x},\mathbf{z})+\sum_{l=1}^N\lambda_l(\mathbf{z})f(\mathbf{x}^{(l)},\mathbf{z}^{l}),\alpha_i-\partial_if(\mathbf{x},\mathbf{z})\right\}=0,
\end{align}
in which the operator is defined by
\begin{align*}
\mathcal{L}f(\mathbf{x},\mathbf{z})&:=-\left(r+\sum_{k=1}^N\lambda_k(\mathbf{z})\right)f(\mathbf{x},\mathbf{z})+\sum_{k=1}^N\left(a_k\partial_k f(\mathbf{x},\mathbf{z})+\frac12b^2_k\partial_{kk} f(\mathbf{x},\mathbf{z})\right)+\sum_{\substack{i,j=1\\i>j}}^Nb_ib_j\rho_{ij}\partial^2_{ij}f(\mathbf{x},\mathbf{z}),
\end{align*}
where $\partial_k f:=\frac{\partial f}{\partial x_k}$ and $\partial_{kk} f:=\frac{\partial^2 f}{\partial x_k^2}$.

Moreover, for each $i=1,\ldots, N$, there exists a mapping $m_i:\{0,1\}^N \mapsto(0,+\infty)$ such that the optimal dividend $\mathbf{D}^*$ for the $i$-th subsidiary is given by the reflection strategy
\begin{align}\label{optim divi}
D^*_i(t):=\max\left\{0,\sup_{0\leq s\leq t}\left\{\wdt X_i(s)-m_i\left(\mathbf{Z}(s)\right)\right\}\right\},\quad i=1,\ldots,N,
\end{align}
and $m_i(\mathbf{Z}(t))$ represents the optimal barrier for the $i$-th subsidiary modulated by the N-dimensional default state indicator $\mathbf{Z}(t)$ at time $t$.
\end{thm}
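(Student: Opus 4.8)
The plan is to read this statement as a combined existence-regularity result for the recursive system \eqref{varineqNc} together with a stochastic verification result, and to attack it by first constructing a classical solution $\Phi$ by hand and then showing, through a verification argument, that $\Phi$ coincides with the value function $f$ and that the reflection strategy \eqref{optim divi} is optimal. The construction proceeds by backward induction on the number $k$ of \emph{defaulted} subsidiaries, exploiting the recursive structure emphasized in the introduction. When every subsidiary has defaulted there is no surplus left to pay dividends, so the natural base case is $\Phi(\mathbf{x},\mathbf{z})\equiv 0$. Assuming $\Phi(\cdot,\mathbf{z}')$ has been constructed for every state $\mathbf{z}'$ with at least $k+1$ defaulted names, I would construct $\Phi(\cdot,\mathbf{z})$ for each $\mathbf{z}$ with exactly $k$ defaulted names; since a further default sends the system to a state with $k+1$ defaulted names, the contagion term $\sum_{l}\lambda_l(\mathbf{z})\Phi(\mathbf{x}^l,\mathbf{z}^l)$ in \eqref{varineqNc} is then a \emph{known} inhomogeneity inherited from the previous induction level.

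For a fixed state $\mathbf{z}$ the decisive simplification is the separation ansatz $\Phi(\mathbf{x},\mathbf{z})=\sum_{l:\,z_l=0}\alpha_l\,g_l(x_l,\mathbf{z})$, suggested by the additive, risk-neutral form of the payoff \eqref{per_2}: because each surviving subsidiary's discounted dividend stream enters linearly and is controlled through its own surplus, the mixed second derivatives $\partial^2_{ij}\Phi$ vanish and the correlation block $\sum_{i>j}b_ib_j\rho_{ij}\partial^2_{ij}\Phi$ drops out of $\mathcal{L}\Phi$. The coupled variational inequality \eqref{varineqNc} then decouples into one one-dimensional auxiliary variational inequality per surviving name. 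Each auxiliary problem is the familiar single-line barrier problem carrying an extra linear drain $-(r+\sum_k\lambda_k(\mathbf{z}))g_i$ and a known source from the contagion transitions in which $i$ survives. I would first solve the associated second-order linear ODE on the continuation region, select the admissible fundamental solutions, and then invoke the smooth-fit principle to pin down a free boundary $m_i(\mathbf{z})\in(0,\infty)$ together with the integration constants, finally checking that the candidate satisfies the VI (the gradient constraint $\alpha_i-\partial_i\Phi\le 0$ above the barrier and the equation below it). This produces the modulated barriers $m_i(\mathbf{z})$ appearing in \eqref{optim divi}.

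With $\Phi\in C^2$ in the surplus variables in hand, verification splits into two inequalities. For the upper bound I would apply the It\^o--Dynkin formula for jump-diffusions to $e^{-rt}\Phi(\mathbf{X}(t),\mathbf{Z}(t))$ along an arbitrary admissible $\mathbf{D}$, expressing the default jumps through the compensated martingales $M_i$ from \eqref{Z_imart}; the drift produced by $\mathcal{L}\Phi+\sum_l\lambda_l(\mathbf{z})\Phi(\mathbf{x}^l,\mathbf{z}^l)\le 0$ and the singular/jump part of $dD_i$ controlled by $\alpha_i-\partial_i\Phi\le 0$ together give $J(\mathbf{x},\mathbf{z},\mathbf{D})\le \Phi(\mathbf{x},\mathbf{z})$ after taking expectations, using the no-simultaneous-jump assumptions \eqref{simul_jump_assumption1}--\eqref{simul_jump_assumption2} to separate the jumps of $D$ from those of $\mathbf{Z}$. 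For the reverse inequality I would check that the reflection strategy \eqref{optim divi} keeps $\mathbf{X}$ inside $\{x_i\le m_i(\mathbf{Z})\}$, so that $\mathcal{L}\Phi+\sum_l\lambda_l(\mathbf{z})\Phi(\mathbf{x}^l,\mathbf{z}^l)=0$ on the support of $dD^{*,c}_i$ while $\partial_i\Phi=\alpha_i$ exactly where $D_i^*$ acts; every inequality becomes an equality, giving $J(\mathbf{x},\mathbf{z},\mathbf{D}^*)=\Phi(\mathbf{x},\mathbf{z})$. Hence $f=\Phi$, the barrier strategy is optimal, and uniqueness of the classical solution follows because any classical solution must, by the same two-sided argument, equal the value function.

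The principal difficulty lies in the verification step rather than the construction: one must ensure that the local-martingale terms arising from the Brownian integrals and from the compensated default martingales $M_i$ are genuine martingales with vanishing expectation, and that the discounted remainder at the ruin time $\tau_i$ and at the horizon vanishes in the limit. This calls for a priori linear-growth and regularity bounds on $\Phi$, careful localization by stopping times, and a uniform-integrability argument so that both $J\le\Phi$ and its attainment survive the passage to the limit. A secondary but still delicate point is proving, within the smooth-fit step, that the free boundary $m_i(\mathbf{z})$ exists, is unique, and lies strictly in $(0,\infty)$ for every default state, which is precisely what underpins the monotonicity comparison of barriers recorded in Corollary \ref{change-m}.
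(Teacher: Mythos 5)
Your proposal is correct and follows essentially the same route as the paper: a backward recursion over default states, the additive separation ansatz that kills the cross-derivative (correlation) terms and reduces \eqref{varineqNc} to one-dimensional auxiliary variational inequalities solved by an explicit ODE solution plus smooth fit, and a two-sided It\^o/localization verification argument identifying the candidate with the value function and establishing optimality of the reflected strategy \eqref{optim divi}. The only cosmetic difference is that you anchor the recursion at the all-defaulted state with value zero, whereas the paper starts from the one-survivor case via the classical single-line result; the substance is identical.
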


From the form of HJBVI \eqref{varineqNc}, we can see that the solution $f(\mathbf{x},\mathbf{z})$ actually depends on the value function $f(\mathbf{x},\mathbf{z}^l)$ with the initial default state $\mathbf{z}^l$ indicating that one subsidiary has already defaulted. Therefore, to show the existence of classical solution to HJBVI \eqref{varineqNc} with $\mathbf{z}=\mathbf{0}$, we have to analyze the existence of the classical solution of the entire system of HJBVIs with all different values of $\mathbf{z}\in\{0,1\}^N$. To this end, we follow a recursive scheme that is based on default states of subsidiaries. The proof of Theorem \ref{mainthm1} is postponed to Section \ref{sec:verf}.

\section{Analysis of HJBVIs: Two Subsidiaries}\label{2companies}

To make our recursive arguments more readable, we first present the main result for only 2 subsidiaries. As one can see, the associated HJB variational inequalities can be solved explicitly for 2 initial subsidiaries and the optimal barriers of dividend for each subsidiary at time $t$ can be derived that depends on the default state $\mathbf{Z}(t)$. The recursive scheme to analyze the variational inequalities has a hierarchy feature, which is operated in a backward manner. To be more precise, we first solve a standard optimal dividend problem when only one subsidiary survives initially, and the associated value function appears as variable coefficients in the top level of HJBVI when both subsidiaries are initially alive. We can then continue to tackle the top level HJBVI with two subsidiaries by employing a separation form of its solution and the smooth-fit principle.

\subsection{\textit{One Surviving Subsidiary}}\label{3.1}
In this subsection, it is assumed that there is only one subsidiary at the initial time. That is, we need to consider default states ${\mathbf z_1}:=(0,1)$ and ${\mathbf z_2}:=(1,0)$. Here, the default state ${\mathbf z_i}$, $i=1,2$, indicates that subsidiary $i$ is alive initially while the other subsidiary has already defaulted due to the external credit risk.

For each $i$, let us consider the default state ${\mathbf z_i}$, and let $x_i\geq 0$ be the initial surplus level for the subsidiary $i$. The associated HJBVI for the default state $(0,1)$ and $(1,0)$ can be derived as
\begin{align}\label{f01}
\max\left\{\mathcal{L}^{{\mathbf z_i}}f(x_i,{\mathbf z_i}), \alpha_i-\frac{\partial f}{\partial x_i}(x_i,{\mathbf z_i})\right\}=0,\quad i=1,2,
\end{align}
where the operator is defined by
\begin{align*}
\mathcal{L}^{{\mathbf z_i}}f:=-\left(r+\lambda_i({\mathbf z_i})\right)f+\left(a_i\frac{\partial f}{\partial x_i}+\frac12b^2_i\frac{\partial^2f}{\partial x_i^2}\right).\notag
\end{align*}
Here, we recall that $\lambda_i({\mathbf z_i})$ stands for the default intensity for subsidiary $i$ given that the other subsidiary has already defaulted.

We can follow some standard results in \cite{AsmussenT}, which solves the stochastic control problem for a single insurance company. The positive discount rate $r>0$ ensures that $\frac12b^2_is^2+a_is-(r+\lambda_i({\mathbf z_i}))=0$ admits two real roots. Let $\hat{\theta}_{i1}$, $-\hat{\theta}_{i2}$ denote the positive and negative root respectively that
\begin{align*}
\hat{\theta}_{i1}:=\frac{-a_i + \sqrt{a^2_i+2b^2_i(r+\lambda_i({\mathbf z_i}))} }{b^2_i},\quad -\hat{\theta}_{i2}:=\frac{-a_i - \sqrt{a^2_i+2b^2_i(r+\lambda_i({\mathbf z_i}))} }{b^2_i},\quad i=1,2.
\end{align*}
According to results in \cite{AsmussenT}, for $i=1,2$, the solution to the HJBVI \eqref{f01} is
\begin{align}\label{func1}
f(x_i,{\mathbf z_i})=\left\{\begin{aligned}
\alpha_i C_i({\mathbf z_i})(e^{\hat{\theta}_{i1}x_i}-e^{-\hat{\theta}_{i2}x_i}),\quad &0\leq x_i\leq m_i({\mathbf z_i}),\\
\alpha_i C_i({\mathbf z_i})(e^{\hat{\theta}_{i1}m_i({\mathbf z_i})}-e^{-\hat{\theta}_{i2}m_i({\mathbf z_i})})+\alpha_i(x_i-m_i({\mathbf z_i})),\quad &x_i\geq m_{i}({\mathbf z_i}),
\end{aligned}\right.
\end{align}
where
\begin{align*}
m_i({\mathbf z_i})&:=\frac2{\hat{\theta}_{i1}+\hat{\theta}_{i2}}\log\left(\frac{\hat{\theta}_{i2}}{\hat{\theta}_{i1}}\right)=\frac{b^2_i}{\sqrt{a^2_i+2b^2_i(r+\lambda_i({\mathbf z_i}))}}\log\left(\frac{\sqrt{a^2_i+2b^2_i(r+\lambda_i({\mathbf z_i}))}+a_i}{\sqrt{a^2_i+2b^2_i(r+\lambda_i({\mathbf z_i}))}-a_i}\right),\notag\\
C_{i}({\mathbf z_i})&:=\frac1{\hat{\theta}_{i1}e^{\hat{\theta}_{i1}m_{i}({\mathbf z_i})}+\hat{\theta}_{i2}e^{-\hat{\theta}_{i2}m_{i}({\mathbf z_i})}},\quad i=1,2.
\end{align*}

\ \\
\subsection{\textit{Auxiliary Results for Two Subsidiaries}}

We continue to consider the case that both subsidiaries are alive at time $t=0$ with the initial surplus $\mathbf{x}=(x_1, x_2)$ and initial default state $\mathbf{z}=(0,0)$.
Using heuristic arguments in Appendix \ref{appx}, the associated HJBVI for the value function can be written by
\begin{align}\label{original HJB}
\max\left\{\mathcal{L}^{(0,0)}f(\mathbf{x},(0,0)),\alpha_1-\partial_1f(\mathbf{x},(0,0)),\alpha_2-\partial_2f(\mathbf{x},(0,0))\right\}=0,
\end{align}
with the operator
\begin{align}\label{L00}
\mathcal{L}^{(0,0)}f(\mathbf{x},(0,0)):=&-(r+\lambda_1(0,0)+\lambda_2(0,0)) f(\mathbf{x},(0,0))+b_1b_2\rho_{12}\partial_{12}f(\mathbf{x},(0,0))\notag\\
&+\left(a_1\partial_1 f(\mathbf{x},(0,0))+\frac12b^2_1\partial^2_{11} f(\mathbf{x},(0,0))\right)\notag\\
&+\left(a_2\partial_2 f(\mathbf{x},(0,0))+\frac12b^2_2\partial^2_{22} f(\mathbf{x},(0,0))\right)\notag\\
&+\lambda_1(0,0)f(x_2,(1,0))+\lambda_2(0,0)f(x_1,(0,1)),
\end{align}
where functions $f(x_1,(0,1))$ and $f(x_2,(1,0))$ are given explicitly in \eqref{func1}, and
\begin{align*}
\partial_if(\mathbf{x},(0,0)):=\frac{\partial f(\mathbf{x},(0,0))}{\partial x_i},\ \ \text{and}\ \ \partial_{ij}f(\mathbf{x},(0,0)):=\frac{\partial^2 f(\mathbf{x},(0,0))}{\partial x_ix_j},\ \ i,j=1,2.
\end{align*}
To show the existence of a classical solution to HJBVI \eqref{original HJB}, we first conjecture that the solution $f(\mathbf{x},(0,0))$ with $\mathbf{x}=(x_1, x_2)\in[0,+\infty)^2$ admits a key separation form that
\begin{align}\label{sepHJB}
f(\mathbf{x},(0,0))=f_1(x_1, (0,0))+f_2(x_2, (0,0)),\ \ \ x_1,x_2\geq 0,
\end{align}
for some smooth functions $f_1$ and $f_2$, i.e., functions of $x_1$ and $x_2$ can be decoupled. The rigorous proof of this separation form will be given in the next subsection.

With the aid of the separation form \eqref{sepHJB}, to solve HJBVI \eqref{original HJB} is equivalent to solve two auxiliary variational inequalities with one dimensional variable $x\in[0,+\infty)$ defined by
\begin{align}
\label{solu_1}\max\left\{\mathcal{A}_if_i(x, (0,0))+\frac{\lambda_1(0,0)\lambda_2(0,0)}{\lambda_i(0,0)}f(x,{\mathbf z_i}),\alpha_i-f'_i(x, (0,0))\right\}=0,\ \ i=1,2,\quad x\geq 0,
\end{align}
where the operators are defined as
\begin{align*}
\begin{aligned}
\mathcal{A}_if(x,(0,0)):=& \frac12b^2_if''(x,(0,0))+a_i f'(x,(0,0))-(r+\lambda_1(0,0)+\lambda_2(0,0))f(x,(0,0)),\quad i=1,2,
\end{aligned}
\end{align*}
and the boundary condition $f_i(0,(0,0))=0$, $i=1,2$.

\begin{rem}
When two subsidiaries are alive, the function $f_1(x_1, (0,0))$ from the decomposition relationship \eqref{decomppp} satisfies variational inequalities \eqref{solu_1}. It is worth noting that this function $f_1(x_1, (0,0))$ can not be simplify interpreted as the value function of the optimal dividend problem for the single subsidiary 1 without considering all other subsidiaries. As one can observe from \eqref{solu_1}, $f_1(x_1, (0,0))$ depends on the coefficient $\lambda_2(0,0)$ that is the default intensity of the subsidiary 2 and also depends on the value function $f_1(x, (0,1))$. However, as pointed out later in Remark \ref{rem5-1}, our mathematical approach can eventually verify that $f_1(x_1, (0,0))$ equals the expected value of the discounted dividend using the dividend control policy $D_1^*(t)$ for subsidiary 1, where $\mathbf{D}^*(t)=(D_1^*(t), D_2^*(t))$ is the optimal dividend for the whole group.
\end{rem}

By symmetry, for the existence of classical solution to the auxiliary variational inequality \eqref{solu_1}, for $i=1,2$, it is sufficient to study the general form of variational inequality with one dimensional variable $x\in[0,+\infty)$ defined by
\begin{align}\label{auxvainq-1}
\max\left\{\mathcal{A}f(x)+h(x), \gamma-f'(x)\right\}=0,
\end{align}
where $\gamma>0$,
\begin{align}\label{oprA}
\mathcal{A}f(x):=-\mu f(x)+\nu f'(x)+\frac12\sigma^2f''(x),\quad\mu,\nu,\sigma>0,
\end{align}
and the function $h$ is a $C^2$ function satisfying $h(0)=0$, $\lim_{u\rightarrow+\infty}h(u)=+\infty$, $h(x)\geq0$, $h'(x)>0$, and $h''(x)\leq0$, for $x\geq 0$.

To tackle the general variational inequality \eqref{auxvainq-1}, we propose to examine the solution to the ODE part at first in the next lemma.
\begin{lem}\label{0m part}
Let us consider the ODE problem
\begin{align}\label{HJBpde}
\mathcal{A}g(x)+h(x)=0,\ \ \ x\geq 0,
\end{align}
with the boundary condition $g(0)=0$ and the operator $\mathcal{A}$ is defined in \eqref{oprA}, $h$ is the same as that in \eqref{auxvainq-1}. The classical solution $g$ to \eqref{HJBpde} admits the form
\begin{align*}
g(x)=\phi_1(x)+C\phi_2(x),
\end{align*}
where $C$ is a parameter in $\mathbb{R}$, and
\begin{align}
\phi_1(x)&:=-\frac2{\sigma^2(\theta_1+\theta_2)}\int_0^xh(u)(e^{\theta_1(x-u)}-e^{-\theta_2(x-u)})du,\quad x\geq 0, \label{g_1}\\
\phi_2(x)&:=e^{\theta_1x}-e^{-\theta_2x},\quad x\geq 0.\label{g_2}\
\end{align}
Here $\theta_1$, $-\theta_2$ are the roots of the equation $\frac12\sigma^2\theta^2+\nu\theta-\mu=0$.
\end{lem}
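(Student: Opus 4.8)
The plan is to treat \eqref{HJBpde} as what it is: a linear, second-order, constant-coefficient inhomogeneous ODE, namely $\frac12\sigma^2 g'' + \nu g' - \mu g = -h$, whose general solution is the sum of any one particular solution and the two-parameter family of homogeneous solutions. First I would dispose of the homogeneous equation. Its characteristic polynomial is precisely $\frac12\sigma^2\theta^2 + \nu\theta - \mu$, and since the product of its roots equals $-2\mu/\sigma^2 < 0$, the two roots are real and of opposite sign; labeling them $\theta_1 > 0$ and $-\theta_2 < 0$ matches the statement, and $\{e^{\theta_1 x}, e^{-\theta_2 x}\}$ is a fundamental system for the homogeneous problem.

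Next I would produce the particular solution $\phi_1$ by variation of parameters, equivalently the Duhamel/Green's-function representation for the initial value problem anchored at $x=0$. Computing the Wronskian $W(x) = -(\theta_1+\theta_2)e^{(\theta_1-\theta_2)x}$ and forming the standard kernel $\big(y_1(u)y_2(x) - y_2(u)y_1(x)\big)/W(u)$ collapses, after cancelling the factor $e^{(\theta_1-\theta_2)u}$, to $\big(e^{\theta_1(x-u)} - e^{-\theta_2(x-u)}\big)/(\theta_1+\theta_2)$; integrating this against the forcing term $-2h(u)/\sigma^2$ yields exactly $\phi_1$ as in \eqref{g_1}. Rather than merely quote the formula, I would verify directly that $\mathcal{A}\phi_1 + h \equiv 0$ by differentiating under the integral sign with Leibniz's rule: the boundary contributions arising at $u=x$ cancel because the kernel and its first $x$-derivative vanish on the diagonal, which simultaneously shows $\phi_1(0) = \phi_1'(0) = 0$.

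Finally I would assemble the general solution $g(x) = \phi_1(x) + A e^{\theta_1 x} + B e^{-\theta_2 x}$ and impose the boundary datum. Since $\phi_1(0)=0$, the condition $g(0)=0$ forces $A+B=0$, so $B=-A$ and $g(x) = \phi_1(x) + A\big(e^{\theta_1 x} - e^{-\theta_2 x}\big) = \phi_1(x) + C\phi_2(x)$ with $C:=A$; conversely every function of this form solves \eqref{HJBpde} together with $g(0)=0$. This is exactly the claimed one-parameter family, the single boundary condition cutting the two-parameter homogeneous family down to the one free constant $C$.

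I do not expect a genuine obstacle, since the content is a constant-coefficient linear ODE. The only step demanding care is the Leibniz differentiation that verifies $\phi_1$ solves the inhomogeneous equation while at the same time delivering the vanishing initial data $\phi_1(0)=\phi_1'(0)=0$: arranging the diagonal boundary terms to cancel cleanly, so that the $h(x)$ generated by differentiating the limits reproduces precisely the inhomogeneity, is where a sign slip would most easily creep in, and is therefore the computation I would write out most carefully.
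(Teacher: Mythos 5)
Your proof is correct and reaches the same formula, but by a slightly different (and arguably more elementary) packaging than the paper: you work directly with the scalar second-order equation and build $\phi_1$ by variation of parameters with the Wronskian $W(x)=-(\theta_1+\theta_2)e^{(\theta_1-\theta_2)x}$, whereas the paper rewrites the ODE as a first-order system $\frac{d}{dx}(g,g')^\top=A(g,g')^\top+\beta$ and expresses the solution through the matrix exponential $e^{Ax}\int_0^x e^{-Au}\beta(u)\,du+e^{Ax}\beta_0$, extracting the same kernel $\frac{1}{\theta_1+\theta_2}\bigl(e^{\theta_1(x-u)}-e^{-\theta_2(x-u)}\bigr)$ by solving for $e^{At}(0,1)^\top$. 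The two routes compute the identical Duhamel/Green's kernel, and both reduce the two-parameter homogeneous family to the single constant $C$ via $g(0)=0$; your version avoids matrix exponentials, the paper's generalizes mechanically to systems. One small imprecision to fix: you assert that both the kernel \emph{and its first $x$-derivative} vanish on the diagonal $u=x$. Only the kernel itself vanishes there; its first $x$-derivative on the diagonal equals $\frac{\theta_1+\theta_2}{\theta_1+\theta_2}=1$, and it is exactly this nonvanishing boundary term, appearing when you differentiate a second time, that produces the $-\tfrac{2}{\sigma^2}h(x)$ contribution so that $\tfrac12\sigma^2\phi_1''+\nu\phi_1'-\mu\phi_1=-h$. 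Your closing sentence acknowledges this correctly, so it is a wording slip rather than a gap, but the two statements as written contradict each other and the middle one should be corrected. The facts $\phi_1(0)=\phi_1'(0)=0$ follow simply because the integrals are over an empty interval at $x=0$, not from any diagonal vanishing of the differentiated kernel.
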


\begin{proof} We first rewrite the ODE \eqref{HJBpde} in a vector form as
\begin{align*}
\frac d{dx}\left(
\begin{array}{c}
g(x)\\
g'(x)
\end{array}
\right)=A\left(
\begin{array}{c}
g(x)\\
g'(x)
\end{array}
\right)+\beta(x),
\end{align*}
where
\begin{align*}
A&:=\left(
\begin{array}{cc}
0&1\\
2\sigma^{-2}\mu&-2\sigma^{-2}\nu
\end{array}
\right),\ \ \ \ \
\beta(x):=\left(
\begin{array}{c}
0\\
-2\sigma^{-2}h(x)
\end{array}
\right).
\end{align*}
One can solve it as
\begin{align*}
\left(
\begin{array}{c}
g(x)\\
g'(x)
\end{array}
\right)=e^{Ax}\int_0^xe^{-Au}\beta(u)du+e^{Ax}\beta_0.
\end{align*}
The boundary condition $g(0)=0$ then yields that $\beta_0=(0,g'(0))^\top$ and
\begin{align*}
e^{Ax}\beta_0=\left(C(e^{\theta_1x}-e^{-\theta_2x}),C(\theta_1e^{\theta_1x}+\theta_2e^{-\theta_2x})\right)^\top,
\end{align*}
for some constant $C$. Note also that $\beta(x)=\big(0,-2\sigma^{-2}h(x)\big)$, hence it follows that
\begin{align*}
e^{Ax}\int_0^xe^{-Au}\beta(u)du&=-2\sigma^{-2}\int_0^xe^{A(x-u)}\left(
\begin{array}{c}
0\\
h(u)
\end{array}
\right)du\\
&=-2\sigma^{-2}\int_0^xh(u)e^{A(x-u)}\left(
\begin{array}{c}
0\\
1
\end{array}
\right)du.\notag
\end{align*}
Let
$\left(\begin{array}{c}
y_1(t)\\
y_2(t)
\end{array}\right)=e^{At}\left(\begin{array}{c}
0\\
1
\end{array}\right)$,
we get that
$\frac d{dt}\left(\begin{array}{c}
y_1(t)\\
y_2(t)
\end{array}\right)=A\left(\begin{array}{c}
y_1(t)\\
y_2(t)
\end{array}\right),
y_1(0)=0,y_2(0)=1$.
Then $y'_1(t)=y_2(t)$ implies that $y_1(t)=C_1e^{\theta_1t}+C_2e^{-\theta_2t}$, $y_1(0)=0$, $y'_1(0)=1$. We then deduce that $C_1=-C_2=\frac1{\theta_1+\theta_2}$.
Therefore, we have
\begin{align*}
e^{Ax}\int_0^xe^{-Au}\beta(u)du&=-2\sigma^{-2}\int_0^xh(u)e^{A(x-u)}\left(
\begin{array}{c}
0\\
1
\end{array}
\right)du\notag\\&=-\frac2{\sigma^2(\theta_1+\theta_2)}\int_0^x\left(
\begin{array}{c}
h(u)(e^{\theta_1(x-u)}-e^{-\theta_2(x-u)})\\
h(u)(\theta_1e^{\theta_1(x-u)}+\theta_2e^{-\theta_2(x-u)})
\end{array}
\right)du,
\end{align*}
and also
\begin{align*}
g(x,(0,0))&=-\frac2{\sigma^2(\theta_1+\theta_2)}\int_0^xh(u)(e^{\theta_1(x-u)}-e^{-\theta_2(x-u)})du+C(e^{\theta_1x}-e^{-\theta_2x})\notag\\
&=\phi_1(x)+C\phi_2(x),
\end{align*}
where $C$ is a parameter, and $\phi_1(x)$ and $\phi_2(x)$ satisfy \eqref{g_1} and \eqref{g_2} respectively.
\end{proof}

Back to the variational inequality \eqref{auxvainq-1}, we plan to apply the smooth-fit principle to mandate the solution to be smooth at the free boundary point. The next technical result becomes an important step to prove the main theorem.
\begin{lem}\label{HJBpara}
Under the conditions in Lemma \ref{0m part}, we have $\zeta>0$ and there exist positive constants $(C,m)$ such that
\begin{align*}
\left\{\begin{aligned}
\phi'_1(m)+C\phi'_2(m)=\gamma,\\
\phi''_1(m)+C\phi''_2(m)=0.
\end{aligned}\right.
\end{align*}
\end{lem}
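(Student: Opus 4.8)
The plan is to collapse the two smooth-fit equations into a single scalar equation in $m$ and solve it by the intermediate value theorem, arranging the reduction so that positivity of $C$ comes for free. Writing $g=\phi_1+C\phi_2$ with $\phi_1,\phi_2$ as in Lemma~\ref{0m part}, note that $\phi_2'(x)=\theta_1 e^{\theta_1 x}+\theta_2 e^{-\theta_2 x}>0$, so I would solve the first equation for $C$ as a function of $m$,
\begin{align*}
C(m):=\frac{\gamma-\phi_1'(m)}{\phi_2'(m)}.
\end{align*}
The point of this choice is that $C(m)>0$ automatically: the integrand in \eqref{g_1} is nonnegative, hence $\phi_1'(m)<0$ for $m>0$ and $\gamma-\phi_1'(m)>\gamma>0$. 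With $C=C(m)$ the first equation holds identically, and the problem reduces to finding $m>0$ at which the second equation $g''(m)=0$ holds, i.e. a zero of
\begin{align*}
\widetilde R(m):=\phi_1''(m)\phi_2'(m)+\bigl(\gamma-\phi_1'(m)\bigr)\phi_2''(m)=\gamma\,\phi_2''(m)-W(m),\qquad W:=\phi_1'\phi_2''-\phi_1''\phi_2'.
\end{align*}

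The key step is to get a workable handle on $W$. Differentiating the governing equations $\mathcal A\phi_2=0$ and $\mathcal A\phi_1=-h$ (which hold since in Lemma~\ref{0m part} $\phi_2$ is a homogeneous solution and $\phi_1$ a particular solution of $\mathcal A\,\cdot+h=0$, with $\mathcal A$ from \eqref{oprA}) shows that $v:=\phi_2'$ and $u:=\phi_1'$ solve $\tfrac12\sigma^2 y''+\nu y'-\mu y=0$ and $\tfrac12\sigma^2 y''+\nu y'-\mu y=-h'$ respectively, and a short computation then yields the linear identity
\begin{align*}
W'(m)=-\frac{2\nu}{\sigma^2}\,W(m)+\frac{2}{\sigma^2}\,h'(m)\phi_2'(m),\qquad W(0)=0,
\end{align*}
the vanishing initial value coming from $\phi_1'(0)=\phi_1''(0)=0$. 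Integrating gives the closed form
\begin{align*}
W(m)=\frac{2}{\sigma^2}\int_0^m e^{-\frac{2\nu}{\sigma^2}(m-u)}h'(u)\phi_2'(u)\,du>0,
\end{align*}
since $h'>0$ and $\phi_2'>0$. Because the roots satisfy $\theta_1-\theta_2=-2\nu/\sigma^2<0$, one has $\phi_2''(0)=\theta_1^2-\theta_2^2<0$, so $\widetilde R(0)=\gamma\,\phi_2''(0)<0$.

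It remains to show $\widetilde R(m)>0$ for large $m$, and this is where the positivity $\zeta>0$ enters. Since $h''\le0$, $h'$ is nonincreasing and converges to a finite limit $L:=\lim_{x\to\infty}h'(x)\ge0$; the quantity that must be strictly positive is $\zeta:=\mu\gamma-L$ (equivalently $\gamma>L/\mu$). I would verify it from the structure of $h$ in \eqref{solu_1}: there $h(x)=\tfrac{\lambda_1(0,0)\lambda_2(0,0)}{\lambda_i(0,0)}f(x,\mathbf z_i)$ has asymptotic slope $L=\tfrac{\lambda_1(0,0)\lambda_2(0,0)}{\lambda_i(0,0)}\alpha_i$ while $\mu\gamma=(r+\lambda_1(0,0)+\lambda_2(0,0))\alpha_i$, whence $\zeta=\alpha_i\bigl(r+\lambda_i(0,0)\bigr)>0$; note that $\zeta>0$ is genuinely needed and is not implied by concavity of $h$ alone. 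Using $\phi_2''(m)\sim\theta_1^2 e^{\theta_1 m}$ together with a Watson-type estimate of the convolution above (substitute $s=m-u$ and use $h'(u)\to L$ and $\tfrac12\sigma^2\theta_1^2+\nu\theta_1=\mu$) gives $W(m)\sim\tfrac{L}{\mu}\theta_1^2 e^{\theta_1 m}$, so
\begin{align*}
\widetilde R(m)\sim\Bigl(\gamma-\tfrac{L}{\mu}\Bigr)\theta_1^2 e^{\theta_1 m}=\tfrac{\zeta}{\mu}\,\theta_1^2 e^{\theta_1 m}\to+\infty.
\end{align*}
Continuity of $\widetilde R$ with $\widetilde R(0)<0$ and $\widetilde R(m)\to+\infty$ then yields, by the intermediate value theorem, some $m>0$ with $\widetilde R(m)=0$; taking $C=C(m)>0$ produces the desired positive pair.

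I expect the main obstacle to be precisely the comparison at infinity: the leading exponential parts of $\gamma\phi_2''(m)$ and $W(m)$ share the rate $e^{\theta_1 m}$, so the sign of their difference is decided by the leading coefficients and hinges entirely on the strict inequality $\zeta>0$ rather than on a crude bound; making this rigorous requires the finite-limit property of $h'$ (from $h''\le0$) and a careful splitting of the integral defining $W$ near $u=m$. If uniqueness of $(C,m)$ were also wanted, I would establish monotonicity of $\widetilde R$ on $(x_2^*,\infty)$, where $x_2^*=\tfrac{2}{\theta_1+\theta_2}\log(\theta_2/\theta_1)$ is the unique zero of $\phi_2''$, but this is not required for the existence asserted here.
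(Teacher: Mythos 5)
Your proof is correct, and it takes a genuinely different route from the paper's at exactly the point where the argument is delicate. The paper works with $q(x):=\phi_1''(x)+\frac{\gamma-\phi_1'(x)}{\phi_2'(x)}\phi_2''(x)$, which is your $\widetilde R(x)/\phi_2'(x)$; it establishes $q(0)<0$ from \eqref{g'_1}--\eqref{g''_1}, asserts that \eqref{q-expression} forces $\lim_{x\to+\infty}q(x)=+\infty$, and then defines $m$ and $C$ by \eqref{nowdefm}--\eqref{nowdefC}, exactly as you do. Your Wronskian identity $W'=-\tfrac{2\nu}{\sigma^2}W+\tfrac{2}{\sigma^2}h'\phi_2'$ with $W(0)=0$ does not appear in the paper, and it is what makes the behaviour at infinity transparent: the closed form for $W$ shows that $\gamma\phi_2''$ and $W$ share the leading rate $e^{\theta_1 x}$, so $q(x)$ converges to the \emph{finite} limit $(\gamma-L/\mu)\,\theta_1$ rather than diverging. (A concrete check: $h(x)=x$, $\sigma^2=2$, $\nu=1$, $\mu=2$, hence $\theta_1=1$, $\theta_2=2$, gives $W(x)=\tfrac12e^{x}+\tfrac32e^{-x}-2e^{-2x}$ and $q(x)\to\gamma-\tfrac12$; for $\gamma=\tfrac14$ one finds $\widetilde R(x)=-\tfrac14e^{x}-\tfrac32e^{-x}+e^{-2x}<0$ for all $x\ge0$, so no smooth-fit point exists.) The paper's divergence claim is therefore inaccurate for general $h$ satisfying only the hypotheses listed after \eqref{auxvainq-1}, and the existence of $m$ genuinely requires the strict inequality $\zeta:=\mu\gamma-\lim_{x\to\infty}h'(x)>0$ that you isolate --- which also decodes the otherwise undefined symbol $\zeta$ in the lemma statement. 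Your verification that $\zeta=\alpha_i\bigl(r+\lambda_i(0,0)\bigr)>0$ for the $h$ actually fed into \eqref{solu_1} (and likewise $\zeta=\alpha_i\bigl(r+\lambda_i(\mathbf{z})\bigr)$ in the multi-subsidiary recursion) is precisely the ingredient that makes the lemma applicable where the paper uses it. In short: same reduction to a one-parameter family $C(m)$ and the same intermediate-value skeleton, but your treatment of the tail supplies, and repairs, the step that the paper's own proof gets wrong.
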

\begin{proof}
Let us start with some identities of derivatives by direct calculations that
\begin{align}
\label{g'_1}\phi'_1(x)&=-\frac2{\sigma^2(\theta_1+\theta_2)}\int_0^xh(u)(\theta_1e^{\theta_1(x-u)}+\theta_2e^{-\theta_2(x-u)})du\leq0,\\
\label{g''_1}\phi''_1(x)&=-\frac2{\sigma^2(\theta_1+\theta_2)}\int_0^xh'(u)(\theta_1e^{\theta_1(x-u)}+\theta_2e^{-\theta_2(x-u)})du\leq0,
\end{align}
where the second inequality holds thanks to $h(0)=0$, and
\begin{align*}
\phi''_1(x)&=-\frac2{\sigma^2(\theta_1+\theta_2)}h(x)\phi'_2(0)-\frac2{\sigma^2(\theta_1+\theta_2)}\int_0^xh(u)\phi''_2(x-u)du\notag\\
&=-\frac2{\sigma^2(\theta_1+\theta_2)}\left(h(x)\phi'_2(0)-h(0)\phi'_2(x)\right)-\frac2{\sigma^2(\theta_1+\theta_2)}\int_0^xh(u)\phi''_2(x-u)du\notag\\
&=\frac2{\sigma^2(\theta_1+\theta_2)}\int_0^xh(u)\phi''_2(x-u)du-\frac2{\sigma^2(\theta_1+\theta_2)}\int_0^xh'(u)\phi'_2(x-u)du\notag\\
&\quad-\frac2{\sigma^2(\theta_1+\theta_2)}\int_0^xh(u)\phi''_2(x-u)du\notag\\
&=-\frac2{\sigma^2(\theta_1+\theta_2)}\int_0^xh'(u)\phi'_2(x-u)du.
\end{align*}
Note that $\phi''_2(0)=\theta^2_1-\theta^2_2<0$. As $\phi'_2(x)>0$, the existence of $m\in(0,+\infty)$ boils down to the existence of root $x\in(0,+\infty)$, to the following equation
\begin{align*}
q(x):=\phi''_1(x)+\frac{\gamma-\phi'_1(x)}{\phi'_2(x)}\phi''_2(x)=0.
\end{align*}
As $\phi'_1(0)=\phi''_1(0)=0$ by \eqref{g'_1} and \eqref{g''_1}, we obtain that $q(0)=\frac{\gamma\phi''_2(0)}{\phi'_2(0)}<0$.

Plugging \eqref{g'_1} and \eqref{g''_1} into the definition of $q$ above, we obtain that
\begin{align}\label{q-expression}
  q(x)=&\gamma\frac{\phi''_2(x)}{\phi'_2(x)}+\frac2{\sigma^2(\theta_1+\theta_2)}\int_0^x\left[\frac{\phi''_2(x)}{\phi'_2(x)}h(u)-h'(u)\right](\theta_1e^{\theta_1(x-u)}+\theta_2e^{-\theta_2(x-u)})du.
\end{align}
As $h''\leq0$, $h'>0$, it follows that $h'$ is bounded. Noting that $\lim_{x\rightarrow+\infty}\frac{\phi''_2(x)}{\phi'_2(x)}=\theta_1>0$,
as well as that $\lim_{u\rightarrow+\infty}h(u)=+\infty$, we deduce from \eqref{q-expression} that $\lim_{x\rightarrow+\infty}q(x)=+\infty$. Therefore $q$ admits at least one root $x\in(0,+\infty)$. We then define
\begin{align}\label{nowdefm}
m:=\inf\left\{u:q(u)=0\right\}\in(0,+\infty),
\end{align}
and choose
\begin{align}\label{nowdefC}
C:=\frac{\gamma-\phi'_1(m)}{\phi'_2(m)}\geq\frac{\gamma}{\phi'_2(m)}>0.
\end{align}
\end{proof}

With the parameters $(C,m)$ obtained in \eqref{nowdefC} and \eqref{nowdefm} in the proof of Lemma \ref{HJBpara}, we can turn to the construction of a classical solution to the general variational inequality.

\begin{prop}\label{THHJB}
The variational inequality
\begin{align}\label{abs_aux}
\max\left\{\mathcal{A}f(x)+h(x),\gamma-f'(x)\right\}=0,\ \ \ x\geq 0,
\end{align}
with the boundary condition $f(0)=0$ admits a $C^2$ solution, which has the form of
\begin{align}\label{fzi expression}
f(x)=\left\{\begin{aligned}
\phi_1(x)+C\phi_2(x),\quad &x\in[0,m],\\
\phi_1(m)+C\phi_2(m)+\gamma(x-m),\quad &x\in[m,+\infty).
\end{aligned}\right.
\end{align}
Here $\phi_1(x)$ and $\phi_2(x)$, $x\geq 0$, are defined in \eqref{g_1} and \eqref{g_2} respectively and parameters $C$ and $m$ are determined in \eqref{nowdefC} and \eqref{nowdefm}.

In particular, we have
\begin{align}\label{HJBVI-specific}
\left\{\begin{aligned}
\mathcal{A}f(x)+h(x)=0,\quad &x\in[0,m],\\
\gamma-f'(x)=0,\quad &x\in[m,+\infty),
\end{aligned}\right.
\end{align}
and $f(0)=0$, $f'>0$, $f''\leq0$, $\lim_{x\rightarrow+\infty}f(x)=+\infty$.
\end{prop}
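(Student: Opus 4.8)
The plan is to take the function $f$ defined by the piecewise formula \eqref{fzi expression} as the candidate and to verify, in turn, its regularity, the two regime equalities \eqref{HJBVI-specific}, the monotonicity and concavity properties, and finally that the unused branch of the maximum is nonpositive in each region. The two equalities in \eqref{HJBVI-specific} are essentially built in: on $[0,m]$ the function $\phi_1+C\phi_2$ solves $\mathcal{A}f+h=0$ by Lemma~\ref{0m part}, while on $[m,+\infty)$ the affine piece has slope $\gamma$ so that $\gamma-f'\equiv 0$. For the $C^2$ regularity at the junction $m$, I would match one-sided values: continuity of $f$ is immediate, continuity of $f'$ is the first smooth-fit equation of Lemma~\ref{HJBpara}, and continuity of $f''$ follows from $f''(m^-)=\phi_1''(m)+C\phi_2''(m)=0=f''(m^+)$, which is exactly the second smooth-fit equation (the affine piece contributes $f''(m^+)=0$). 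It then remains to establish the sign conditions.

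The crux is concavity on $[0,m]$, which I would obtain by a maximum-principle argument applied to $w:=f''$. Since $h\in C^2$, a standard bootstrap on $\mathcal{A}f+h=0$ shows $f\in C^4$ on $[0,m]$, so differentiating the ODE twice yields $\mathcal{A}w=-h''\geq 0$ on $[0,m]$, using $h''\leq 0$. At the endpoints one has $w(m)=0$ (the second smooth-fit equation) and $w(0)=\phi_1''(0)+C\phi_2''(0)=C(\theta_1^2-\theta_2^2)<0$, because $\phi_1''(0)=0$, $C>0$ and $\theta_1<\theta_2$. As the operator $\mathcal{A}$ has strictly negative zeroth-order coefficient $-\mu$, the generalized weak maximum principle (applied to the positive part $w^+$) gives $\max_{[0,m]}w\leq\max\{w^+(0),w^+(m)\}=0$, whence $f''\leq 0$ on $[0,m]$. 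I expect this step to be the main obstacle, the delicate point being precisely the negative zeroth-order term, which forces one to invoke the maximum principle for $w^+$ rather than for $w$ itself; a brief self-contained interior-maximum argument (at a positive interior maximum one has $w'=0$ and $w''\leq 0$, forcing $\mathcal{A}w<0$, a contradiction) can be included to keep the proof elementary.

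The remaining properties then follow. Concavity on $[0,m]$ together with $f''\equiv 0$ on $[m,+\infty)$ gives $f''\leq 0$ throughout; since $f'$ is nonincreasing on $[0,m]$ with $f'(m)=\gamma$, we get $f'\geq\gamma>0$ there, and $f'\equiv\gamma>0$ beyond $m$, so $f'>0$ globally and the branch $\gamma-f'\leq 0$ holds on $[0,m]$ (where the ODE equality already holds). Because $f$ is affine with slope $\gamma$ on $[m,+\infty)$, we obtain $\lim_{x\to+\infty}f(x)=+\infty$.

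Finally I would check $\mathcal{A}f+h\leq 0$ on $[m,+\infty)$, the only inequality not yet verified. Set $\psi:=\mathcal{A}f+h$; continuity and the ODE give $\psi(m)=0$. On $[m,+\infty)$ one has $f'=\gamma$, $f''=0$, so $\psi'(x)=h'(x)-\mu\gamma$. Differentiating $\mathcal{A}f+h=0$ once at $m^-$ and using $f''(m)=0$, $f'(m)=\gamma$ gives $h'(m)=\mu\gamma-\tfrac12\sigma^2 f'''(m^-)$; since $f''\leq 0$ rises to $f''(m)=0$, we have $f'''(m^-)\geq 0$, hence $h'(m)\leq\mu\gamma$. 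As $h''\leq 0$ makes $h'$ nonincreasing, $\psi'(x)=h'(x)-\mu\gamma\leq h'(m)-\mu\gamma\leq 0$ for $x\geq m$, so $\psi$ is nonincreasing and $\psi\leq\psi(m)=0$. Combining the four parts, in each region one term of the maximum vanishes while the other is nonpositive, so \eqref{abs_aux} holds and $f$ is the desired $C^2$ solution.
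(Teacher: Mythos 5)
Your proposal is correct and follows essentially the same route as the paper's proof: smooth fit at $m$ via Lemma~\ref{HJBpara}, concavity on $[0,m]$ by applying the weak maximum principle to $f''$ (using $h''\leq 0$ and the boundary values $f''(0)<0$, $f''(m)=0$), and the inequality $\mathcal{A}f+h\leq 0$ on $[m,+\infty)$ via $f'''(m^-)\geq 0$ and the monotonicity of $h'$. The only cosmetic differences are that you make the endpoint sign $f''(0)=C(\theta_1^2-\theta_2^2)<0$ and the $C^4$ bootstrap explicit, and you offer a self-contained interior-maximum argument in place of citing the maximum principle.
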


\begin{proof}[Proof of Proposition \ref{THHJB}]
Let $g(x)$ be the classical solution to the ODE \eqref{HJBpde}. We have that $f(x)$ coincides with $g(x)$ in Lemma \ref{0m part}, for $x\leq m$ and the function is a linear function, for $x>m$. We aim to prove that the function $f$ is the desired $C^2$ solution to the variational inequality \eqref{abs_aux}. Thanks to Lemma \ref{HJBpara}, we deduce that $f'(m)=\gamma$, $f''(m)=0$. In view of its definition, it is straightforward to see that $f$ belongs to $C^2$. On the other hand, Lemma \ref{0m part} and \eqref{fzi expression} give the validity of \eqref{HJBVI-specific}. Therefore \eqref{abs_aux} holds once we show that
\begin{align*}
f'(x)=\phi'_1(x)+C\phi'_2(x)\geq \gamma,\ \ \text{for}\ x\in[0,m],
\end{align*}
as well as
\begin{align*}
\mathcal{A}f(x)+h(x)\leq 0,\ \ \text{for}\ x\geq m.
\end{align*}
Define the elliptic operator
\begin{align}
  Lf:=-\frac12\sigma^2f''-\nu f'+\mu f,
\end{align}
and consider $g(x)$ in Lemma \ref{0m part} with $C$ in \eqref{nowdefC}. Then we have
\begin{align*}
  Lg(x)=h(x),\quad x\in(0,m).
\end{align*}
Note that $h$ is twice differentiable, and that $h''\leq0$. It therefore follows that
\begin{align*}
  Lg''(x)=h''(x)\leq0,\quad x\in(0,m).
\end{align*}
Since $\mu>0$, according to the weak maximum principle (see Theorem 2 in $\S6.4$ of \cite{EvansPDE}), we have
\begin{align*}
  \max_{x\in[0,m]}g''(x)\leq\max\left\{\big[g''(0)\big]^+,\big[g''(m)\big]^+\right\}=0.
\end{align*}
Therefore, we have
\begin{align*}
 \phi'_1(x)+C\phi'_2(x)\geq \phi'_1(m)+C\phi'_2(m)=\gamma,\ \ \text{for}\ x\in[0,m].
\end{align*}
In other words,
\begin{align}\label{f''-1}
  \phi''_1(x)+C\phi''_2(x)\leq0,\quad x\in[0,m].
\end{align}

We next show that $\mathcal{A}f'(x)+h'(x)\leq0$, for $x\geq m$. In our previous argument, we have shown that $\phi''_1(x)+C\phi''_2(x)\leq0$, $x\in[0,m]$, i.e.,  $f''(x)\leq0$, $x\in[0,m]$. It follows that
\begin{align}\label{f'''}
f'''(m-)=\lim_{x\rightarrow m-}\frac{f''(m)-f''(x)}{m-x}=-\lim_{x\rightarrow m-}\frac{f''(x)}{m-x}\geq0.
\end{align}
Thanks to the definition of $f$, we have that $\mathcal{A}f'(x)+h'(x)=0$ on $x\in[0,m)$. By sending $x\rightarrow m-$, we get
\begin{align*}
\mathcal{A}f'(m-)+h'(m)=0.
\end{align*}
That is,
\begin{align*}
-\mu \gamma+h'(m)=-\frac12\sigma^2f'''(m-)\leq0.
\end{align*}
For $x>m$, we have $f''(x)=0$, $f'(x)=\gamma$, and $h'(x)\leq h'(m)$ as $h''\leq0$. Hence, we have
\begin{align*}
\mathcal{A}f'(x)+h'(x)=-\mu f'(x)+h'(x)\leq-\mu \gamma+h'(m)\leq0.
\end{align*}
Then for $x\geq m$, we arrive at
\begin{align*}
\mathcal{A}f(x)+h(x)\leq\mathcal{A}f(m)+h(m)=0.
\end{align*}
Putting all the pieces together, we can conclude that $f$ is the desired $C^2$ solution to the variational inequality \eqref{abs_aux}.

To complete the proof, it remains to show that
\begin{align*}
f(0)=0,\ \ f'(x)>0,\ \ f''(x)\leq0,\ \ x\geq0.
\end{align*}
In view of \eqref{g_1}, \eqref{g_2} and \eqref{fzi expression}, it holds that $f(0)=0$. Note that the variational inequality \eqref{abs_aux} gives $f'(x)>0$, $x\geq0$. Moreover, in view of \eqref{f''-1} and the fact that $f(x)$ is linear on $x\in[m,+\infty)$, we obtain that $f''(x)\leq0$, $x\geq0$, $\lim_{x\rightarrow+\infty}f(x)=+\infty$.
\end{proof}

\subsection{\textit{Main Results for Two Subsidiaries}}
In view of the explicit solution of the auxiliary variational inequality \eqref{abs_aux}, for $i=1,2$, we can derive the explicit solution $f_i(x_i,(0,0))$ to the variational inequality \eqref{solu_1} by setting $\mathcal{A}=\mathcal{A}_i$, $h(x_i) = \frac{\lambda_1(0,0)\lambda_2(0,0)}{\lambda_i(0,0)}f_i(x_i,{\mathbf z_i})$ and $\gamma=\alpha_i$.

Moreover, for $i=1,2$, let us denote the constant $m$ and $C$ for variational inequality \eqref{solu_1} by $m_i(0,0)$ and $C_i(0,0)$, because we can verify later that the constant $m_i(0,0)$ is the optimal barrier of the dividend strategy for the subsidiary $i$.

Let us define $K_i :=\alpha_i C_i({\mathbf z_i})(e^{\hat{\theta}_{i1}m_i({\mathbf z_i})}-e^{-\hat{\theta}_{i2}m_i({\mathbf z_i})})-\alpha_im_i({\mathbf z_i})$, $i=1,2$, and we will construct the explicit solution of the variational inequality (\ref{solu_1}) in the following steps.

For $i=1,2$, let us denote ${\theta}_{i1}$, $-{\theta}_{i2}$ as the positive and negative roots of the equation $\frac12b^2_i{\theta}^2+a_i{\theta}-(r+\lambda_1(0,0)+\lambda_2(0,0))=0$ respectively that
\begin{align*}
{\theta}_{i1} &:=\frac{-a_i + \sqrt{a^2_i+2b^2_i(r+\lambda_1(0,0)+\lambda_2(0,0))} }{b^2_i},\\
\ \\
-{\theta}_{i2} &:=\frac{-a_i - \sqrt{a^2_i+2b^2_i(r+\lambda_1(0,0)+\lambda_2(0,0))} }{b^2_i}.
\end{align*}
Let us first define for $i=1,2$ and the variable $x\geq 0$ that
{\small
\begin{align}\label{f11}
f_{i1}(x,(0,0))
&:=\left\{\begin{aligned}
&f_{i11}(x):=-\frac2{\sigma^2}\frac{\alpha_i \lambda_1(0,0)\lambda_2(0,0)C_i({\mathbf z_i})}{\lambda_i(0,0)(\theta_{i1}+\theta_{i2})}\\
& \times
\Bigg[ \frac{(\theta_{i1}+\theta_{i2})e^{\hat\theta_{i1}x}}{(\hat\theta_{i1}-\theta_{i1})(\hat\theta_{i1}+\theta_{i2})}+ \frac{(\theta_{i1}+\theta_{i2})e^{-\hat\theta_{i2}x}}{(\hat\theta_{i2}+\theta_{i1})(-\hat\theta_{i2}+\theta_{i2})}\\
& -\frac{(\hat\theta_{i1}+\hat{\theta}_{i2})e^{\theta_{i1}x}}{(\hat\theta_{i1}-\theta_{i1})(\hat\theta_{i2}+\theta_{i1})}-\frac{(\hat\theta_{i1}+\hat{\theta}_{i2}) e^{-\theta_{i2}x}}{(\hat\theta_{i1}+\theta_{i2})(-\hat\theta_{i2}+\theta_{i2})}
\Bigg],\quad0\leq x\leq m_i({\mathbf z_i}),\\
&f_{i12}(x):=-\frac2{\sigma^2}\frac{\alpha_i \lambda_1(0,0)\lambda_2(0,0)C_i({\mathbf z_i})}{\lambda_i(0,0)(\theta_{i1}+\theta_{i2})}\\
& \times
\Bigg[  \frac{e^{\theta_{i1}x} }{\hat\theta_{i1}-\theta_{i1}}\Big(e^{(\hat\theta_{i1}-\theta_{i1})m_i({\mathbf z_i})}-1\Big)+ \frac{ e^{-\theta_{i2}x}}{\hat\theta_{i1}+\theta_{i2}}\Big(-e^{(\hat\theta_{i1}+\theta_{i2})m_i({\mathbf z_i})}+1\Big)\\
&+ \frac{e^{\theta_{i1}x} }{\hat\theta_{i2}+\theta_{i1}}\Big(e^{-(\hat\theta_{i2}+\theta_{i1})m_i({\mathbf z_i})}-1\Big)+ \frac{ e^{-\theta_{i2}x}}{-\hat\theta_{i2}+\theta_{i2}}\Big(e^{(-\hat\theta_{i2}+\theta_{i2})m_i({\mathbf z_i})}-1\Big)
\Bigg]\\
&-\frac2{\sigma^2}\frac{K_i\lambda_1(0,0)\lambda_2(0,0)}{\lambda_i(0,0)(\theta_{i1}+\theta_{i2})}\\
&\times\Bigg[
\frac{1}{\theta_{i1}}\Big(e^{\theta_{i1}x-\theta_{i1}m_i({\mathbf z_i})}-1\Big) + \frac{1}{\theta_{i2}}\Big(e^{-\theta_{i2}x+\theta_{i2}m_i({\mathbf z_i})}-1\Big)
\Bigg]\\
&-\frac2{\sigma^2}\frac{\alpha_i\lambda_1(0,0)\lambda_2(0,0)}{\lambda_i(0,0)(\theta_{i1}+\theta_{i2})}\\
&\times\Bigg[
\frac{1}{(\theta_{i1})^2}\Big(-\theta_{i1}x-1+(\theta_{i1}m_i({\mathbf z_i})+1)e^{\theta_{i1}x-\theta_{i1}m_i({\mathbf z_i})}\Big)\\
&+ \frac{1}{(\theta_{i2})^2}\Big(-\theta_{i2}x+1+(\theta_{i2}m_i({\mathbf z_i})-1)e^{-\theta_{i2}x+\theta_{i2}m_i({\mathbf z_i})}\Big)
\Bigg],\quad m_i({\mathbf z_i})\leq x,
\end{aligned}\right.\\
f_{i2}(x,(0,0))&=e^{\theta_{i1}x}-e^{-\theta_{i2}x},\ \ x\geq 0.\label{f12}
\end{align}}
In view of Lemma \ref{HJBpara} and Proposition \ref{THHJB}, we can define the constant
\begin{align*}
m_i(0,0):=\inf\{s:q_i(s)=0\},\quad i=1,2,
\end{align*}
where
\begin{align*}
q_i(x):=f''_{i1}(x,(0,0))+\frac{\alpha_i-f'_{i1}(x,(0,0))}{f'_{i2}(x,(0,0))}f''_{i2}(x,(0,0)),\quad i=1,2.
\end{align*}
We also define $C_i(0,0):=\frac{\alpha_i-f'_{i1}(m_i(0,0))}{f'_{i2}(m_i(0,0))}$, $i=1,2$.

To illustrate the change of the optimal barrier when one subsidiary defaults, let us choose the model parameters:  $a_1=0.1$, $b_1=0.07$, $a_2=0.15$, $b_2=0.06$, $\lambda_1(0,0)=0.02$, $\lambda_1(0,1)=0.04$, $\lambda_2(0,0)=0.01$, $\lambda_2(1,0)=0.04$, $r=0.05$ and $\alpha_1=0.4$. We can see from Figure $1$ that the comparison results $m_1(0,0)>m_1(0,1)$ and $m_2(0,0)>m_2(1,0)$ hold. That is, both subsidiaries decrease the optimal barriers for dividend payment after the other subsidiary defaults. These observations are consistent with our intuition that the default contagion effect forces the surviving subsidiary to take into account that itself will go default very soon because of the increased default intensity. Therefore the surviving one prefers to pay dividend as soon as possible by setting a lower dividend threshold before the unexpected default happens.

$$
\begin{array}{ccc}
\begin{array}{c}
\includegraphics[height=2.6in]{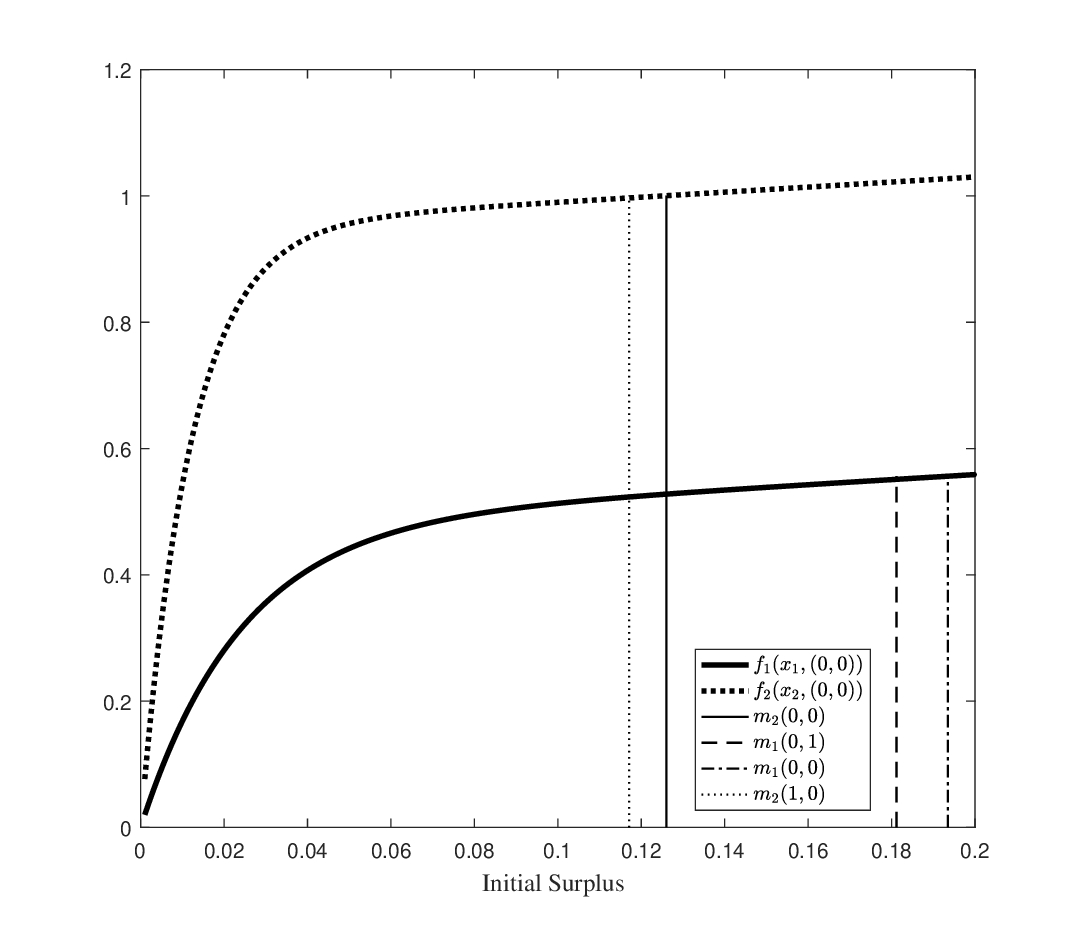} \\
\mbox{Figure 1: {\small The change of the optimal barrier when default occurs}}
\end{array}
\end{array}
$$
\ \\
We actually have the next theoretical result on the change of the optimal barrier when one subsidiary defaults.

\begin{cor}\label{change-m}
For the case of two subsidiaries, as we have $\lambda_1(0,1)\geq\lambda_1(0,0)$ and $\lambda_2(1,0)\geq\lambda_2(0,0)$, we always have the orders that $m_1(0,0)\geq m_1(0,1)$ and $m_2(0,0)\geq m_2(1,0)$.
\end{cor}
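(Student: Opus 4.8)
We prove $m_1(0,0)\ge m_1(0,1)$; the inequality $m_2(0,0)\ge m_2(1,0)$ then follows verbatim after exchanging the two subsidiaries and using $\lambda_2(1,0)\ge\lambda_2(0,0)$. Write $u:=f_1(\cdot,(0,0))$ and $v:=f(\cdot,(0,1))$, and set $\delta_0:=r+\lambda_1(0,0)+\lambda_2(0,0)$, $\delta_1:=r+\lambda_1(0,1)$, and $L_\delta g:=\tfrac12 b_1^2 g''+a_1 g'-\delta g$, so that $\mathcal A_1=L_{\delta_0}$ and $\mathcal L^{\mathbf z_1}=L_{\delta_1}$. By Proposition \ref{THHJB} applied to \eqref{solu_1}, $u$ is the barrier solution of $\max\{L_{\delta_0}u+h_2,\alpha_1-u'\}=0$, $u(0)=0$, with source $h_2:=\lambda_2(0,0)v$ and free boundary $m_1(0,0)$. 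The plan is to realize $v$ as the barrier solution of the variational inequality governed by the \emph{same} operator $L_{\delta_0}$ and the \emph{same} gradient bound $\alpha_1$, differing only through a smaller source, and then to deduce the ordering of free boundaries from a monotonicity in the source.

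First I rewrite $v$. On its continuation region $[0,m_1(0,1))$ one has $L_{\delta_1}v=0$, hence $L_{\delta_0}v+(\delta_0-\delta_1)v=L_{\delta_1}v=0$; on $[m_1(0,1),\infty)$ one has $\alpha_1-v'=0$ and $L_{\delta_0}v+(\delta_0-\delta_1)v=L_{\delta_1}v\le0$. Thus, with $h_1:=(\delta_0-\delta_1)v$, the function $v$ solves $\max\{L_{\delta_0}v+h_1,\alpha_1-v'\}=0$, $v(0)=0$, with free boundary $m_1(0,1)$. Since $\delta_0-\delta_1=\lambda_2(0,0)-(\lambda_1(0,1)-\lambda_1(0,0))$, the two sources obey
\[
h_2-h_1=\big(\lambda_1(0,1)-\lambda_1(0,0)\big)v\ge0,\qquad h_2'-h_1'=\big(\lambda_1(0,1)-\lambda_1(0,0)\big)v'\ge0,
\]
because $\lambda_1(0,1)\ge\lambda_1(0,0)$ and $v,v'\ge0$ by Proposition \ref{THHJB}.

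Next I compare the free boundaries through the derivatives. Differentiating each variational inequality and using Proposition \ref{THHJB} (namely $f'\ge\alpha_1$ everywhere, $L_{\delta_0}f+h=0$ on the continuation set and $L_{\delta_0}f'+h'\le0$ on the dividend set), both $u'$ and $v'$ solve \emph{lower–obstacle} problems of the form $\max\{L_{\delta_0}p+h',\,\alpha_1-p\}=0$, with obstacle $\alpha_1$, running source $h'$ ($h=h_2$ for $u'$, $h=h_1$ for $v'$), and with $\alpha_1$ first attained exactly at the corresponding free boundary. Now $u'\ge\alpha_1$ throughout, while on the continuation set $L_{\delta_0}u'+h_1'=(h_1'-h_2')+(L_{\delta_0}u'+h_2')=-(h_2'-h_1')\le0$, and on the dividend set $L_{\delta_0}u'+h_1'\le L_{\delta_0}u'+h_2'\le0$; hence $u'$ is a supersolution of the obstacle problem solved by $v'$. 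Since the solution of a lower–obstacle problem is its smallest supersolution, $u'\ge v'$ on $[0,\infty)$. As $v'>\alpha_1$ strictly on $[0,m_1(0,1))$, we get $u'>\alpha_1$ there, so the dividend region $\{u'=\alpha_1\}=[m_1(0,0),\infty)$ cannot meet $[0,m_1(0,1))$; that is, $m_1(0,0)\ge m_1(0,1)$.

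The main obstacle is precisely this last passage from the ordering of sources to the ordering of free boundaries: it cannot be carried out at the level of the values $u,v$, since a pointwise bound $u\ge v$ does not locate the free boundaries, which is why I differentiate and work with the obstacle problems satisfied by $u'$ and $v'$. Making the supersolution comparison rigorous requires the comparison principle on the half–line $[0,\infty)$ together with the behaviour at the ruin point $x=0$; the latter is compatible because $h_1(0)=h_2(0)=0$ and $u(0)=v(0)=0$ force the same Robin-type relation $\tfrac12 b_1^2 p'(0)+a_1p(0)=0$ for both $p=u'$ and $p=v'$, while at infinity $u',v'\to\alpha_1$. The strictness $v'>\alpha_1$ on $[0,m_1(0,1))$ used above follows from the explicit form \eqref{func1} (equivalently from $v''<0$ there, which holds since $v''$ increases to $v''(m_1(0,1))=0$).
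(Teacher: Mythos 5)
Your overall strategy --- pass to the derivatives $u'$, $v'$, view them as solution and supersolution of one and the same obstacle problem with obstacle $\alpha_1$, conclude $u'\geq v'$, and read off the order of the free boundaries from the strict inequality $v'>\alpha_1$ on $[0,m_1(0,1))$ --- is essentially the second half of the paper's proof, where $g=f_1'(\cdot,(0,0))$ and $\hat g=f_1'(\cdot,(0,1))$ are compared as viscosity solution and supersolution of \eqref{viscosity-g}. The interior part of your supersolution verification is sound, including the reformulation of $v$ under the operator $L_{\delta_0}$ with source $h_1=(\delta_0-\delta_1)v$ and the inequality $h_2'-h_1'=\bigl(\lambda_1(0,1)-\lambda_1(0,0)\bigr)v'\geq0$.

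The gap is at the boundary point $x=0$. The ``smallest supersolution'' characterization (equivalently, the comparison principle you invoke) requires the supersolution to dominate at the boundary, i.e.\ you need $u'(0)\geq v'(0)$. The observation that both $u'$ and $v'$ satisfy the same relation $\tfrac12 b_1^2p'(0)+a_1p(0)=0$ does not deliver this: written in oblique-derivative form $\partial_n p+\beta p=0$ with the outward normal at $0$, the coefficient is $\beta=-2a_1/b_1^2<0$, which is the wrong sign for a Robin-type comparison or Hopf argument. Concretely, if $w:=v'-u'$ attained a positive maximum at $0$, the common relation only gives $w'(0)=-\tfrac{2a_1}{b_1^2}w(0)<0$, which is perfectly consistent with $0$ being a boundary maximum of the subsolution $w$; no contradiction arises, so the comparison on $[0,\infty)$ does not close. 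The paper fills exactly this hole with a preliminary step that you omit: it first proves the pointwise order of the undifferentiated functions, $f_1(x,(0,0))\geq f_1(x,(0,1))$ for all $x\geq0$ (a viscosity comparison after the scaling $e^{-\delta x}$, using $\lambda_1(0,1)\geq\lambda_1(0,0)$), and since both functions vanish at $x=0$ this yields the Dirichlet ordering $u'(0)=f_1'(0,(0,0))\geq f_1'(0,(0,1))=v'(0)$ needed to launch the comparison of the derivatives on $[0,M]$ with $g(M)=\hat g(M)=\alpha_1$. Supplying that step (or any other proof of $u'(0)\geq v'(0)$) would complete your argument; as written, the crucial boundary inequality is asserted rather than proved. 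A secondary, standard point you gloss over: $u'$ and $v'$ are only $C^1$ across their free boundaries, so the sub/supersolution properties there must be checked in the viscosity sense (the paper does this with second-order jets).
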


\begin{proof}
It suffices to show that $m_1(0,0)\geq m_1(0,1)$. We first show that $f_1(x,(0,0))\geq f_1(x,(0,1))$, $x\geq0$. Define $f_\delta(x):=e^{-\delta x}f_1(x,(0,0))$, $\hat{f}_\delta(x):=e^{-\delta x}f_1(x,(0,1))$. Here, we choose the constant $\delta>0$ small enough such that $r+\lambda_1(0,0)+\lambda_2(0,0)-\delta a_1-\frac12\delta^2b^2_1>0$. We can verify by direct calculation that $f_\delta(x)$ satisfies
  \begin{align}\label{}
  \max\left\{\mathcal{A}^\delta_1f_\delta(x)+\lambda_2(0,0)f_1(x,(0,1)),\alpha_1-\big(e^{\delta x}f_\delta(x)\big)'\right\}=0,\quad x\geq 0,
  \end{align}
  with $f_\delta(0)=0$ and the operator $\mathcal{A}^\delta_1$ defined by
  \begin{align*}
    \mathcal{A}^\delta_1f:=\frac12b^2_1\big(e^{\delta x}f(x)\big)''+a_1\big(e^{\delta x}f(x)\big)'-\big(r+\lambda_1(0,0)+\lambda_2(0,0)\big)e^{\delta x}f(x).
  \end{align*}
  On the other hand, we have that
  \begin{align*}\label{}
  \max\left\{\hat{\mathcal{A}}^\delta_1\hat{f}_\delta(x)+\lambda_2(0,0)f_1(x,(0,1)),\alpha_1-\big(e^{\delta x}\hat{f}_\delta(x)\big)'\right\}=0,\quad x\geq 0,
  \end{align*}
  with $\hat{f}_\delta(0)=0$ and the operator $ \hat{\mathcal{A}}^\delta_1$ defined by
  \begin{align*}
    \hat{\mathcal{A}}^\delta_1f:=\frac12b^2_1\big(e^{\delta x}f(x)\big)''+a_1\big(e^{\delta x}f(x)\big)'-\big(r+\hat{\lambda}_1(0,0)+\lambda_2(0,0)\big)e^{\delta x}f(x),
  \end{align*}
  and $\hat{\lambda}_1(0,0):=\lambda_1(0,1)$. Noting that $\hat{\lambda}_1(0,0)\geq\lambda_1(0,0)$ and $\hat{f}_\delta\geq0$, we thus have that
  \begin{align*}
  \max\left\{\mathcal{A}^\delta_1\hat{f}_\delta(x)+\lambda_2(0,0)f_1(x,(0,1)),\alpha_1- \big(e^{\delta x}\hat{f}_\delta(x)\big)' \right\}\geq0,\quad x\geq 0.
  \end{align*}
  The comparison result of viscosity solutions (see e.g. Section 5B in \cite{viscosity-guide}) yields that, for each $M>0$,
  \begin{align*}
    \hat{f}_\delta(x)-f_\delta(x)\leq\max\left\{0,\hat{f}_\delta(M)-f_\delta(M)\right\},\quad x\in[0,M].
  \end{align*}
Note that $M>0$ is arbitrary and $\lim_{M\rightarrow+\infty}|\hat{f}_\delta(M)-f_\delta(M)|=0$. Letting $M\rightarrow+\infty$ in the inequality above, we obtain that
  \begin{align*}
    f_\delta(x)-\hat{f}_\delta(x)\geq0,\quad x\geq0.
  \end{align*}
This gives that $f_1(x,(0,0))\geq f_1(x,(0,1))$, $x\geq0$.

Next, let us define $g(x_1):=f'_1(x_1,(0,0))$. We claim that $g$ is the viscosity solution of
\begin{align}\label{viscosity-g}
  \max\left\{\mathcal{A}_1g(x_1)+\lambda_2(0,0)f'_1(x_1,(0,1)),\alpha_1-g(x_1)\right\}=0,
\end{align}
with $g(0)=f'_1(0,(0,0))$ and $g(M)=\alpha_1$, where the constant $M$ is sufficiently large that $M>m_1(0,1)\vee m_1(0,0)$. Indeed, on $(0,+\infty)\setminus\{m_1(0,1)\}$, $g$ is $C^2$ and satisfies \eqref{viscosity-g}. On the other hand, similar to \eqref{f'''}, we can derive that
\begin{align*}
  \lim_{x\uparrow m_1(0,1)}g''(x)=\lim_{x\uparrow m_1(0,1)}f'''_1(x,(0,1))\geq0,
\end{align*}
as well as that $\lim_{x\downarrow m_1(0,1)}g''(x)=0$. Hence
\begin{align*}
  &D^{+(2)}g(m_1(0,1))=\left\{(0,p):p\geq\lim_{x\uparrow m_1(0,1)}f'''_1(x,(0,1))\right\},\\
  &D^{-(2)}g(m_1(0,1))=\left\{(0,p):p\leq0\right\}.
\end{align*}
Here, we denote $D^{+(2)}$ and $D^{-(2)}$ the second order Super-Jet and Sub-Jet respectively. For $(0,p)\in D^{+(2)}g(m_1(0,1))$, we have that
\begin{align*}
  \max\left\{\frac12b^2_1\cdot p+a_1\cdot0-(r+\lambda_1(0,0)+\lambda_2(0,0))g(m_1(0,1)),\alpha_1-g(m_1(0,1))\right\}\geq0,
\end{align*}
while for $(0,p)\in D^{-(2)}g(m_1(0,1))$, we have
\begin{align*}
  \max\left\{\frac12b^2_1\cdot p+a_1\cdot0-(r+\lambda_1(0,0)+\lambda_2(0,0))g(m_1(0,1)),\alpha_1-g(m_1(0,1))\right\}\leq0.
\end{align*}
Therefore $g$ is the viscosity solution of \eqref{viscosity-g}.

Let us define $\hat{g}(x):=f'_1(x,(0,1))$. Following the same arguments above, we have that $\hat{g}$ is the viscosity supersolution to \eqref{viscosity-g}, or equivalently, the viscosity solution to
\begin{align}
  \max\left\{\mathcal{A}_1\hat{g}(x_1)+\lambda_2(0,0)f'_1(x_1,(0,1)),\alpha_1-\hat{g}(x_1)\right\}\geq0,
\end{align}
with $\hat{g}(0)=f'_1(0,(0,1))$ and $\hat{g}(M)=\alpha_1$.

Because we have shown that
\begin{align*}
  f_1(x,(0,0))\geq f_1(x,(0,1)),\quad f_1(0,(0,0))=f_1(0,(0,1))=0,
\end{align*}
it follows that $f'_1(0,(0,0))\geq f'_1(0,(0,1))$, i.e., $g(0)\geq\hat{g}(0)$. Moreover, $g(M)=\hat{g}(M)=\alpha_1$. The comparison result of viscosity solutions gives that $g(x)\geq\hat{g}(x)$, $x\in[0,M]$. That is, $f'_1(x,(0,0))\geq f'_1(x,(0,1))$. We thus deduce that
\begin{align*}
  \alpha_1=f'_1(m_1(0,0),(0,0))\geq f'_1(m_1(0,0),(0,1))\geq\alpha_1,
\end{align*}
which implies that $f'_1(m_1(0,0),(0,1))=\alpha_1$. As $f'_1(x,(0,1))>\alpha_1$, for $x\in(0,m_1(0,1))$, we can obtain the desired order that $m_1(0,1)\leq m_1(0,0)$.
\end{proof}

\ \\
Based on solution forms in \eqref{f11} and \eqref{f12} and Corollary \ref{change-m}, we have $m_i(0,0) \geq m_i({\mathbf z_i})$, $i=1,2$, and the solution of the auxiliary variational inequality (\ref{solu_1}) satisfies the piecewise form that
\begin{align}\label{pieces12}
f_i(x_i,(0,0))=\left\{\begin{aligned}
f_{i11}(x_i)+C_i(0,0)f_{i2}(x_i,(0,0))&,\quad\quad &0\leq x_i< m_i({\mathbf z_i}),\\
f_{i12}(x_i)+C_i(0,0)f_{i2}(x_i,(0,0))&,\quad\quad &m_i({\mathbf z_i})\leq x_i\leq m_i(0,0),\\
f_{i12}(m_i(0,0))+C_i(0,0)f_{i2}(m_i(0,0),(0,0))&\\
+\alpha_i(x_i-m_i(0,0))&,\quad\quad &x_i>m_i(0,0).
\end{aligned}\right.
\end{align}

We can continue to verify the important conjecture $f(\mathbf{x},(0,0))=f_1(x_1, (0,0))+f_2(x_2, (0,0))$ in \eqref{sepHJB} and prove the existence of a classical solution to HJBVI \eqref{original HJB} in the next theorem.

\begin{thm}\label{soltwo}
There exists a $C^2$ solution to HJBVI \eqref{original HJB} that admits the form
\begin{align}\label{decomppp}
  f(\mathbf{x},(0,0)):=f_1(x_1,(0,0))+f_2(x_2,(0,0)),
\end{align}
where $f_i(x_i, (0,0))$ given in \eqref{pieces12} is the $C^2$ solution to the auxiliary variational inequality \eqref{solu_1}, $i=1,2$.
\end{thm}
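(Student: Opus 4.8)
The plan is to verify the separation ansatz \eqref{sepHJB} directly: take $f_1,f_2$ from \eqref{pieces12}, set $f(\mathbf{x},(0,0)):=f_1(x_1,(0,0))+f_2(x_2,(0,0))$, and check that this $f$ is $C^2$ and solves \eqref{original HJB}. First I would record that each $f_i$ is exactly the $C^2$ solution furnished by Proposition \ref{THHJB}, obtained by matching the auxiliary inequality \eqref{solu_1} with the general variational inequality \eqref{abs_aux} through $\sigma=b_i$, $\nu=a_i$, $\mu=r+\lambda_1(0,0)+\lambda_2(0,0)$, $\gamma=\alpha_i$ and $h(x)=\frac{\lambda_1(0,0)\lambda_2(0,0)}{\lambda_i(0,0)}f(x,{\mathbf z_i})$. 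Invoking Proposition \ref{THHJB} requires checking that this $h$ satisfies its standing hypotheses, and this is where the explicit single-survivor value function \eqref{func1} enters: $f(\cdot,{\mathbf z_i})$ is nonnegative, $C^2$ by the smooth-fit at $m_i({\mathbf z_i})$, strictly increasing, concave and of linear growth, so that $h(0)=0$, $h'>0$, $h''\le0$ and $h(u)\to+\infty$ all hold. Consequently $f_1,f_2\in C^2$ and hence $f\in C^2$.

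The key computation exploits the decoupling. Since $f_1$ depends only on $x_1$ and $f_2$ only on $x_2$, the mixed second derivative vanishes, $\partial_{12}f\equiv0$, so the correlation term $b_1b_2\rho_{12}\partial_{12}f$ in \eqref{L00} drops out. Substituting $f=f_1+f_2$ into $\mathcal{L}^{(0,0)}$ and regrouping, I would obtain the identity
\begin{align*}
\mathcal{L}^{(0,0)}f(\mathbf{x},(0,0))=&\Big[\mathcal{A}_1f_1(x_1,(0,0))+\lambda_2(0,0)f(x_1,(0,1))\Big]\\
&+\Big[\mathcal{A}_2f_2(x_2,(0,0))+\lambda_1(0,0)f(x_2,(1,0))\Big],
\end{align*}
which is precisely the sum of the two operator expressions appearing in the auxiliary variational inequalities \eqref{solu_1} (recall $\frac{\lambda_1(0,0)\lambda_2(0,0)}{\lambda_1(0,0)}=\lambda_2(0,0)$ and symmetrically for $i=2$). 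Since each auxiliary inequality forces its own bracket to be $\le0$ and forces $\alpha_i-\partial_i f=\alpha_i-f_i'\le0$, all three entries of the maximum in \eqref{original HJB} are nonpositive.

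It then remains to show that at every $(x_1,x_2)\in[0,+\infty)^2$ at least one entry of the maximum vanishes. I would partition the quadrant using the barriers $m_1(0,0)$ and $m_2(0,0)$. On the region $\{x_1\le m_1(0,0)\}\cap\{x_2\le m_2(0,0)\}$ both auxiliary ODE parts hold with equality by Proposition \ref{THHJB}, so each bracket is zero and $\mathcal{L}^{(0,0)}f=0$; on any region with $x_i>m_i(0,0)$ the gradient constraint binds, $f_i'=\alpha_i$, whence $\alpha_i-\partial_i f=0$. In each of the four regions at least one entry equals zero while the remaining entries are $\le0$, so $\max\{\cdots\}=0$ and $f$ of the form \eqref{decomppp} solves \eqref{original HJB}. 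I expect the main obstacle to lie not in the algebra but in justifying that $h$ meets the hypotheses of Proposition \ref{THHJB} --- in particular the concavity $h''\le0$, which hinges on the concavity of the single-survivor value function \eqref{func1} across its free boundary $m_i({\mathbf z_i})$ --- and in organizing the region-by-region argument cleanly across the two breakpoints $m_i({\mathbf z_i})$ and $m_i(0,0)$ at which the three pieces of \eqref{pieces12} are glued together.
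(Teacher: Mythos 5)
Your proposal is correct and follows essentially the same route as the paper: both take $f_1,f_2$ from Proposition \ref{THHJB} applied to the auxiliary inequalities \eqref{solu_1}, note that the separation kills the cross-derivative term so that $\mathcal{L}^{(0,0)}f$ splits into the two one-dimensional brackets, and then use the dichotomy in \eqref{HJBVI-specific} to see that at each point at least one entry of the maximum vanishes (the paper phrases this contrapositively rather than by your four-region partition, but the logic is identical). Your explicit check that $h(x)=\frac{\lambda_1(0,0)\lambda_2(0,0)}{\lambda_i(0,0)}f(x,{\mathbf z_i})$ meets the standing hypotheses of Proposition \ref{THHJB} is a detail the paper leaves implicit, and is a welcome addition rather than a deviation.
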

\begin{proof}
Thanks to Proposition \ref{THHJB}, the auxiliary variational inequality \eqref{solu_1} admits $C^2$ solution, for $i=1,2$. Let $f_i$ be the solution to \eqref{solu_1}, $i=1,2$. By setting $f(\mathbf{x},(0,0)):=f_1(x_1,(0,0))+f_2(x_2,(0,0))$ and plugging into \eqref{L00}, we have
\begin{align*}
\mathcal{L}^{(0,0)}f(\mathbf{x},(0,0))=&-rf_1(x_1,(0,0))-rf_2(x_2,(0,0))\notag\\
&+\left(a_1\partial_1 f_1(x_1,(0,0))+\frac12b^2_1\partial^2_{11} f_1(x_1,(0,0))\right)\notag\\
&-\big(\lambda_1(0,0)+\lambda_2(0,0)\big)f_1(x_1,(0,0))+\lambda_2(0,0)f(x_1,(0,1))\notag\\
&+\left(a_2\partial_2 f_2(x_2,(0,0))+\frac12b^2_2\partial^2_{22} f_2(x_2,(0,0))\right)\notag\\
&-\big(\lambda_1(0,0)+\lambda_2(0,0)\big)f_2(x_2,(0,0))+\lambda_2(0,0)f(x_1,(0,1)).\notag
\end{align*}
It readily yields that
\begin{align*}
&\mathcal{L}^{(0,0)}f(\mathbf{x},(0,0))=\mathcal{A}_1f_1(x_1,(0,0))+\lambda_2(0,0)f_1(x_1,(0,1))+\mathcal{A}_2f_2(x_2,(0,0))+\lambda_1(0,0)f_2(x_2,(1,0)),\notag\\
&\alpha_1-\partial_1f(\mathbf{x},(0,0))=\alpha_1-f'_1(x_1,(0,0)),\notag\\
&\alpha_2-\partial_2f(\mathbf{x},(0,0))=\alpha_2-f'_2(x_2,(0,0)).
\end{align*}
As $f_i$ solves the variational inequality \eqref{solu_1}, $i=1,2$, we have that
\begin{align*}
\max\left\{\mathcal{L}^{(0,0)}f(\mathbf{x},(0,0)),\alpha_1-\partial_1f(\mathbf{x},(0,0)),\alpha_2-\partial_2f(\mathbf{x},(0,0))\right\}\leq0.
\end{align*}
Moreover, if $\mathcal{L}^{(0,0)}f(\mathbf{x},(0,0))<0$, we get that $$\mathcal{A}_1f_1(x_1,(0,0))+\lambda_2(0,0)f(x_1,(0,1))<0 \quad\text{or}\quad \mathcal{A}_2f_2(x_2,(0,0))+\lambda_1(0,0)f(x_2,(1,0))<0.$$ Without loss of generality, we assume that $\mathcal{A}_1f_1(x_1,(0,0))+\lambda_2(0,0)f(x_1,(0,1))<0$. By \eqref{solu_1}, we have that $\alpha-\partial_1f(\mathbf{x},(0,0))=\alpha-f'_1(x_1,(0,0))=0$, and hence
\begin{align*}
\max\left\{\mathcal{L}^{(0,0)}f(\mathbf{x},(0,0)),\alpha-\partial_1f(\mathbf{x},(0,0)),1-\alpha-\partial_2f(\mathbf{x},(0,0))\right\}=0.
\end{align*}
This shows that $f(\mathbf{x},(0,0))$ in \eqref{decomppp} is the solution of the HJBVI \eqref{original HJB}.
\end{proof}

\section{Analysis of HJBVIs: Multiple Subsidiaries}\label{Multi comp}
This section generalizes the previous results to the case with $N\geq 3$ subsidiaries by employing mathematical induction. To this end, let us start to focus on the case that there are $k\leq N$ subsidiaries defaulted at the initial time and show the existence of classical solution to the associated variational inequality. The final verification proof of the optimal reflection dividend strategy for $N$ initial subsidiaries is given in the next section.

For $0\leq k\leq N$, let us consider the initial default state that $k$ subsidiaries have defaulted and denote $\mathbf{z}=0^{j_1,\ldots,j_k}$ as the $N$ dimensional vector that $j_1$, $\ldots$, $j_k$ components are $1$ and all other components are $0$ if $k\geq 1$ and denote $\mathbf{z}=0^{j_1,\ldots,j_k}$ as the N-dimensional zero vector $\mathbf{0}$ if $k=0$. We also denote by $\left\{j_{k+1},\ldots,j_N\right\}:=\left\{1,2,\ldots,N\right\}\backslash\left\{j_1,\ldots,j_k\right\}$. For example, if $(j_1,\ldots,j_k)=(1,2,\ldots,k)$, then $(j_{k+1},\cdots,j_{N})=(k+1,\ldots,N)$.

Consider $\mathbf{z}=0^{1,\ldots,k}$, $\mathbf{x}=(0,\ldots,0,x_{k+1},\ldots,x_N)$, and define the operator
\begin{align}\label{L^z}
\mathcal{L}^{\mathbf{z}}f(\mathbf{x},\mathbf{z}):=&-\left(r+\sum_{i=k+1}^N\lambda_{i}(\mathbf{z})\right)f(\mathbf{x},\mathbf{z})+\sum_{i=k+1}^N\left(a_i\partial_i f(\mathbf{x},\mathbf{z})+\frac12b^2_i\partial^2_{ii} f(\mathbf{x},\mathbf{z})\right)\\
&+\sum_{\substack{i,l=k+1\\i<l}}^Nb_ib_l\rho_{il}\partial^2_{il}f(\mathbf{x},\mathbf{z}).\notag
\end{align}
With the notation above, we introduce the recursive system of HJBVIs
\begin{align}\label{}
\max_{k+1\leq i\leq N}\left\{\mathcal{L}^{\mathbf{z}}f(\mathbf{x},\mathbf{z})+\sum_{l=k+1}^N\lambda_l(\mathbf{z})f(\mathbf{x}^{(l)},\mathbf{z}^l),\alpha_i-\partial_if(\mathbf{x},\mathbf{z})\right\}=0.
\end{align}
Similar to the previous section, we seek for the solution in the separation form $$f(\mathbf{x},\mathbf{z})=\sum_{i=k+1}^Nf_{i}(x_{i} ,\mathbf{z}),$$ so that $x_{k+1}$, $\ldots$, $x_N$ are decoupled, where we define, for any $x\geq 0$, that
\begin{align}\label{fzi expression0}
f_i(x,\mathbf{z})=\left\{\begin{aligned}
f_{i,1}(x,\mathbf{z})+C_i(\mathbf{z})f_{i,2}(x,\mathbf{z}),\quad &0\leq x\leq m_i(\mathbf{z}),\\
f_{i,1}(m_i(\mathbf{z}),\mathbf{z})+C_i(\mathbf{z})f_{i,2}(m_i(\mathbf{z}),\mathbf{z})+\alpha_i(x-m_i(\mathbf{z})),\quad &x\geq m_i(\mathbf{z}).
\end{aligned}\right.
\end{align}
In particular, for $k+1\leq i\leq N$,
\begin{align}\label{U-i-and-U}
\left\{\begin{aligned}
      \alpha_{i}-\partial_{i}f(\mathbf{x},\mathbf{z})=0,\quad\mathbf{x}\in U_i(\mathbf{z}),\notag\\
      \mathcal{L}^{\mathbf{z}}f(\mathbf{x},\mathbf{z})+\sum_{l=k+1}^N\lambda_{l}(\mathbf{z})f(\mathbf{x}^{(l)},\mathbf{z}^{l})=0, \quad\mathbf{x}\in U(\mathbf{z}),\end{aligned}\right.
\end{align}
where we have introduced
\begin{align}
  U_i(\mathbf{z}):=\big\{x_i\geq m_i(\mathbf{z})\big\},\quad\text{and}\quad U(\mathbf{z}):=\bigcap_{i=k+1}^NU^c_i(\mathbf{z}).
\end{align}
For $\mathbf{z}=0^{j_1,\ldots,j_k}$ and $\mathbf{x}=(x_1,\ldots,x_N)$ with $x_{j_i}=0$, $1\leq i\leq k$, we can define $U_i(\mathbf{z})$, $U(\mathbf{z})$ and the operator $\mathcal{L}^{\mathbf{z}}$ in the same manner as \eqref{U-i-and-U} and \eqref{L^z}, except that the notation $i$ and $l$ in the expression, satisfying $k+1\leq i,l\leq N$, is replaced with $j_i$ and $j_l$, satisfying $k+1\leq i,l\leq N$.

With the discussion and notations above, we now proceed to prove by induction that the following statement {\bf(S$_n$)} holds, for $1\leq n\leq N$:
\ \\
\begin{itemize}
\item[{\bf(S$_n$)}] For $N-n\leq k\leq N$ and $\mathbf{z}=0^{j_1,\ldots,j_k}$, there exists a solution $f$ to HJBVI
\begin{align}\label{original HJB1}
\max_{k+1\leq i\leq N}\left\{\mathcal{L}^{\mathbf{z}}f(\mathbf{x},\mathbf{z})+\sum_{l=k+1}^N\lambda_{j_l}(\mathbf{z})f(\mathbf{x}^{(j_l)},\mathbf{z}^{j_l}),\alpha_{j_i}-\partial_{j_i}f(\mathbf{x},\mathbf{z})\right\}=0,
\end{align}
where $f$ admits the form $f(\mathbf{x},\mathbf{z})=\sum_{i=k+1}^Nf_{j_i}(x_{j_i} ,\mathbf{z})$, satisfying
\begin{align}\label{solution-decomposition}
f_{j_i}(x,\mathbf{z})=\left\{\begin{aligned}
f_{j_i,1}(x,\mathbf{z})+C_{j_i}(\mathbf{z})f_{j_i,2}(x,\mathbf{z}),\quad &0\leq x\leq m_{j_i}(\mathbf{z}),\\
f_{j_i,1}(m_{j_i}(\mathbf{z}),\mathbf{z})+C_{j_i}(\mathbf{z})f_{j_i,2}(m_{j_i}(\mathbf{z}),\mathbf{z})+\alpha_{j_i}(x-m_{j_i}(\mathbf{z})),\quad &x\geq m_{j_i}(\mathbf{z}).
\end{aligned}\right.
\end{align}
In particular, for $k+1\leq i\leq N$,
\begin{align}\label{0-region}
\left\{\begin{aligned}
      \alpha_{j_i}-\partial_{j_i}f(\mathbf{x},\mathbf{z})=0,\quad\mathbf{x}\in U_i(\mathbf{z}),\\
      \mathcal{L}^{\mathbf{z}}f(\mathbf{x},\mathbf{z})+\sum_{l=k+1}^N\lambda_{j_l}(\mathbf{z})f(\mathbf{x}^{(j_l)},\mathbf{z}^{j_l})=0, \quad\mathbf{x}\in U(\mathbf{z}),\end{aligned}\right.
\end{align}
and $f_{j_i}(0,\mathbf{z})=0$, $f_{j_i}\geq0$, $f'_{j_i}>0$, $f''_{j_i}\leq0$, $\lim_{x\rightarrow+\infty}f_{j_i}(x,\mathbf{z})=+\infty$.
\end{itemize}

The expressions of \eqref{func1} and \eqref{fzi expression}, Proposition \ref{THHJB} and Theorem \ref{soltwo} in the previous section imply that {\bf(S$_n$)} holds when $n=1,2$.

Let $n$ be any fixed integer satisfying $1\leq n<N$. Assuming that statement {\bf(S$_n$)} holds true, we continue to show by induction that statement {\bf(S$_{n+1}$)} is also true. Due to symmetry, it suffices to show that HJBVI \eqref{original HJB1} admits a solution $f(\mathbf{x},\mathbf{z})$, for $\mathbf{z}=0^{1,\ldots, k}$ when $k=N-n-1$, as well as that $f(\mathbf{x},\mathbf{z})$ should admit the form specified in \eqref{solution-decomposition} and \eqref{0-region}. In the case where $\mathbf{z}=0^{1,\ldots, k}$ and $k=N-n-1$, the previous HJBVI \eqref{original HJB1} turns out to be
\begin{align}\label{induct HJB}
\max_{N-n\leq i\leq N}\left\{\mathcal{L}^{\mathbf{z}}f(\mathbf{x},\mathbf{z})+\sum_{l=N-n}^N\left(\sum_{j\neq l}\lambda_l(\mathbf{z})f_{j}(x_{j},\mathbf{z}^l)\right),\alpha_i-\partial_if(\mathbf{x},\mathbf{z})\right\}=0.
\end{align}

In the same fashion of the previous section with two subsidiaries, it is sufficient to study the auxiliary variational inequality, for $N-n\leq i\leq N$, with one dimensional variable $x\geq 0$ that
\begin{align}\label{original HJB3}
\max\left\{\mathcal{A}^{\mathbf{z},i}f_i(x,\mathbf{z})+\left(\sum_{\substack{l=N-n\\l\neq i}}^N\lambda_l(\mathbf{z})f_{i}(x,\mathbf{z}^l)\right),\alpha_i-f_i'(x,\mathbf{z})\right\}=0.
\end{align}
Here, we define the operator
\begin{align*}
\mathcal{A}^{\mathbf{z},i}f:=-\left(r+\tilde{\lambda}(\mathbf{z})\right)f+a_if'+\frac12b^2_if'',
\end{align*}
where $\tilde{\lambda}(\mathbf{z}):=\sum_{l=N-n}^N\lambda_l(\mathbf{z})$.

\begin{lem}\label{multi-fi}
Suppose that statement {\bf(S$_n$)} is true, then the auxiliary variational inequality \eqref{original HJB3} with the boundary condition $f(\mathbf{0},\mathbf{z})=0$ admits a $C^2$ solution $f_{i}(x,\mathbf{z})$, $N-n\leq i\leq N$, where $\mathbf{z}=0^{1,\ldots,N-n-1}$, and
\begin{align}\label{fzi expression1}
f_{i}(x,\mathbf{z})=\left\{\begin{aligned}
f_{i,1}(x,\mathbf{z})+C_i(\mathbf{z})f_{i,2}(x,\mathbf{z}),\quad &0\leq x\leq m_i(\mathbf{z}),\\
f_{i,1}(m_i(\mathbf{z}),\mathbf{z})+C_i(\mathbf{z})f_{i,2}(m_i(\mathbf{z}),\mathbf{z})+\alpha_i(x-m_i(\mathbf{z})),\quad &x> m_i(\mathbf{z}).
\end{aligned}\right.
\end{align}
Moreover, for $x\geq0$ and $N-n\leq i\leq N$, it holds that
\begin{align}\label{HJBVI-specific1}
\left\{\begin{aligned}
\mathcal{A}^{\mathbf{z},i}f_i(x,\mathbf{z})+\left(\sum_{\substack{l=N-n\\l\neq i}}^N\lambda_l(\mathbf{z})f_{i}(x,\mathbf{z}^l)\right)=0,\quad &x\in[0,m_i(\mathbf{z})],\\
\alpha_i-f_i'(x,\mathbf{z})=0,\quad &x\in[m_i(\mathbf{z}),+\infty),
\end{aligned}\right.
\end{align}
as well as that $f_{i}(0,\mathbf{z})=0$, $f'_{i}(x,\mathbf{z})>0$, $f''_{i}(x,\mathbf{z})\leq0$, and $\lim_{x\rightarrow+\infty}f_i(\mathbf{x},\mathbf{z})=+\infty$.
\end{lem}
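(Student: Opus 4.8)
The plan is to recognise the auxiliary variational inequality \eqref{original HJB3} as a concrete instance of the abstract problem \eqref{auxvainq-1} that has already been solved in Proposition \ref{THHJB}, and then merely to check that its inhomogeneous term meets the structural hypotheses required there. Fix $i$ with $N-n\leq i\leq N$. I would match the operators by setting $\mu=r+\tilde\lambda(\mathbf{z})$, $\nu=a_i$ and $\sigma=b_i$ in \eqref{oprA}, so that $\mathcal{A}=\mathcal{A}^{\mathbf{z},i}$; together with the choice $\gamma=\alpha_i$ and
\[
h(x):=\sum_{\substack{l=N-n\\ l\neq i}}^N\lambda_l(\mathbf{z})\,f_i(x,\mathbf{z}^l),
\]
the inequality \eqref{original HJB3} is exactly \eqref{auxvainq-1}. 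Since $r>0$ and the intensities are positive we have $\mu,\nu,\sigma>0$, and $\gamma=\alpha_i>0$, so it only remains to verify that this specific $h$ satisfies the conditions imposed on the inhomogeneity in \eqref{auxvainq-1}.

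The verification of the hypotheses on $h$ is the crux of the argument, and it is precisely here that the inductive hypothesis {\bf(S$_n$)} is invoked. Each neighbour state $\mathbf{z}^l$ carries one more defaulted subsidiary than $\mathbf{z}$, hence has $k+1=N-n$ defaulted components, which places $\mathbf{z}^l$ within the range $N-n\leq k'\leq N$ covered by {\bf(S$_n$)}; moreover $l\neq i$ guarantees that subsidiary $i$ is still alive in state $\mathbf{z}^l$, so the component $f_i(\cdot,\mathbf{z}^l)$ is genuinely supplied by {\bf(S$_n$)}. By that hypothesis each $f_i(\cdot,\mathbf{z}^l)$ is $C^2$ and satisfies $f_i(0,\mathbf{z}^l)=0$, $f_i\geq 0$, $f'_i>0$, $f''_i\leq 0$ and $\lim_{x\to\infty}f_i(x,\mathbf{z}^l)=+\infty$. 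Taking the positive linear combination defining $h$ and using $\lambda_l(\mathbf{z})>0$, I would conclude successively that $h\in C^2$, $h(0)=0$, $h(x)\geq 0$, $h'(x)>0$ and $h''(x)\leq 0$; and since there are $n\geq 1$ surviving indices distinct from $i$, at least one summand diverges, giving $\lim_{x\to\infty}h(x)=+\infty$. Thus $h$ satisfies every requirement of \eqref{auxvainq-1}.

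With $h$ verified, Proposition \ref{THHJB} applies verbatim and yields a $C^2$ solution $f_i(x,\mathbf{z})$ of the piecewise form \eqref{fzi expression1}, where the free boundary $m_i(\mathbf{z})$ and the constant $C_i(\mathbf{z})$ are produced by Lemma \ref{HJBpara} specialised to this $h$ (this is exactly where the default-state dependence of the optimal barrier enters). The same proposition also delivers the regime identities \eqref{HJBVI-specific1}, the boundary value $f_i(0,\mathbf{z})=0$, and the qualitative properties $f'_i>0$, $f''_i\leq 0$, $\lim_{x\to\infty}f_i(x,\mathbf{z})=+\infty$, completing the inductive step for subsidiary $i$; symmetry then handles the remaining indices.

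The hard part is not any new estimate but the bookkeeping of the recursion: one must carefully confirm that the defaulted-count arithmetic $k+1=N-n$ indeed lands in the range of {\bf(S$_n$)}, and that, although $f_i(\cdot,\mathbf{z}^l)$ is only the single-subsidiary slice of a higher-default-level solution, the inductive hypothesis nonetheless transfers to it the full list of monotonicity and concavity properties needed to apply Proposition \ref{THHJB}. Once this transfer is secured, nothing further is required beyond invoking the proposition.
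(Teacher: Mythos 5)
Your proposal is correct and follows essentially the same route as the paper: the paper likewise verifies that the inhomogeneity $\sum_{l\neq i}\lambda_l(\mathbf{z})f_i(\cdot,\mathbf{z}^l)$ inherits from {\bf(S$_n$)} the properties $h(0)=0$, $h\geq0$, $h'>0$, $h''\leq0$ required of the abstract problem \eqref{auxvainq-1}, and then invokes the argument of Proposition \ref{THHJB} to produce the free boundary $m_i(\mathbf{z})$ and the piecewise solution. Your explicit checks that $\mathbf{z}^l$ has exactly $N-n$ defaulted components (hence lies in the range of {\bf(S$_n$)}) and that $\lim_{x\to\infty}h(x)=+\infty$ because the sum is nonempty are details the paper leaves implicit, but they are the right ones.
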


\begin{proof}
Note that for any $N-n\leq l\leq N$, $\mathbf{z}^l=0^{1,\ldots,N-n-1,l}$. Our induction assumption {\bf(S$_n$)} gives the boundary condition $\sum_{l\neq i}\lambda_l(\mathbf{z})f_{i}(0,\mathbf{z}^l)=0$ as well as the results
\begin{align*}
\sum_{l\neq i}\lambda_l(\mathbf{z})f_{i}(x,\mathbf{z}^l)\geq0,\ \ \ \left(\sum_{l\neq i}\lambda_l(\mathbf{z})f_{i}(x,\mathbf{z}^l)\right)'>0,\ \ \ \left(\sum_{l\neq i}\lambda_l(\mathbf{z})f_{i}(x,\mathbf{z}^l)\right)''\leq0,
\end{align*}
for $ N-n\leq i\leq N$. Therefore, for $N-n\leq i\leq N$, we can conclude the existence of $C^2$ solution $f_{i}(x,\mathbf{z})$ by using the same argument in the proof of Proposition \ref{THHJB} and obtain the existence of free boundary points $m_i(\mathbf{z})$ with $\mathbf{z}=0^{1,\ldots, N-n-1}$ such that \eqref{HJBVI-specific1} holds. Moreover, we have $f_{i}(0,\mathbf{z})=0$, $f'_{i}(x,\mathbf{z})>0$, $f''_{i}(x,\mathbf{z})\leq0$, $x\geq0$. In view of \eqref{fzi expression1}, we also have $\lim_{x\rightarrow+\infty}f_i(x,\mathbf{z})=+\infty$.
\end{proof}

\begin{lem}\label{n comp HJB}
Suppose that statement {\bf(S$_n$)} is true, then the variational inequality \eqref{induct HJB} admits a $C^2$ solution, which is in the separation form of
\begin{align}\label{4.1solu}
  f(\mathbf{x},\mathbf{z})=\sum_{i=N-n}^Nf_{i}(x_i,\mathbf{z}),
\end{align}
where each $f_{i}(x,\mathbf{z})$ defined in \eqref{fzi expression1} is the solution to the auxiliary variational inequality \eqref{original HJB3}. In particular, for $x\geq0$, $f_{i}(x,\mathbf{z})$ satisfies \eqref{HJBVI-specific1}, $f_{i}(0,\mathbf{z})=0$, $f'_{i}(x,\mathbf{z})>0$, $f''_{i}(x,\mathbf{z})\leq0$, and $\lim_{x\rightarrow+\infty}f_i(\mathbf{x},\mathbf{z})=+\infty$. Therefore statement {\bf(S$_{n+1}$)} is also true.
\end{lem}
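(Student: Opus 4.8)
The plan is to mirror the proof of Theorem~\ref{soltwo}, now with $n+1$ surviving subsidiaries: I would reduce the multidimensional HJBVI \eqref{induct HJB} to the one-dimensional auxiliary variational inequalities \eqref{original HJB3}, whose solvability is already guaranteed by Lemma~\ref{multi-fi}. Concretely, I first invoke Lemma~\ref{multi-fi} to obtain, for each $N-n\leq i\leq N$, a $C^2$ solution $f_i(x,\mathbf{z})$ to \eqref{original HJB3} of the form \eqref{fzi expression1} satisfying \eqref{HJBVI-specific1} together with $f_i(0,\mathbf{z})=0$, $f_i'>0$, $f_i''\leq 0$ and $\lim_{x\rightarrow+\infty}f_i(x,\mathbf{z})=+\infty$. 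I then set $f(\mathbf{x},\mathbf{z}):=\sum_{i=N-n}^N f_i(x_i,\mathbf{z})$ as in \eqref{4.1solu} and substitute it into the operator $\mathcal{L}^{\mathbf{z}}$.

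The decisive simplification is that each $f_i$ depends only on the single variable $x_i$, so every mixed second derivative $\partial^2_{il}f$ with $i\neq l$ vanishes and the correlation terms $b_ib_l\rho_{il}\partial^2_{il}f$ in \eqref{L^z} drop out entirely; hence $\mathcal{L}^{\mathbf{z}}f(\mathbf{x},\mathbf{z})=\sum_{i=N-n}^N\mathcal{A}^{\mathbf{z},i}f_i(x_i,\mathbf{z})$. For the contagion term I would use the induction hypothesis {\bf(S$_n$)}, by which the lower-level solution decomposes as $f(\mathbf{x}^l,\mathbf{z}^l)=\sum_{j\neq l}f_j(x_j,\mathbf{z}^l)$, and then interchange the order of summation:
\begin{align*}
\sum_{l=N-n}^N\sum_{j\neq l}\lambda_l(\mathbf{z})f_j(x_j,\mathbf{z}^l)=\sum_{i=N-n}^N\left(\sum_{\substack{l=N-n\\ l\neq i}}^N\lambda_l(\mathbf{z})f_i(x_i,\mathbf{z}^l)\right),
\end{align*}
where the inner sum on the right is precisely the contagion coefficient of the auxiliary inequality \eqref{original HJB3} for index $i$. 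Combining the two identities yields
\begin{align*}
\mathcal{L}^{\mathbf{z}}f(\mathbf{x},\mathbf{z})+\sum_{l=N-n}^N\sum_{j\neq l}\lambda_l(\mathbf{z})f_j(x_j,\mathbf{z}^l)=\sum_{i=N-n}^N\left[\mathcal{A}^{\mathbf{z},i}f_i(x_i,\mathbf{z})+\sum_{\substack{l=N-n\\ l\neq i}}^N\lambda_l(\mathbf{z})f_i(x_i,\mathbf{z}^l)\right],
\end{align*}
a sum of the ODE parts of the auxiliary inequalities.

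From here the verification of the maximum is purely logical, exactly as in Theorem~\ref{soltwo}. Since each $f_i$ solves \eqref{original HJB3}, every bracket above is $\leq 0$ and every $\alpha_i-\partial_if=\alpha_i-f_i'\leq 0$, so the full max in \eqref{induct HJB} is $\leq 0$. If the first entry is strictly negative, then at least one bracket, say for some index $i_0$, is strictly negative, which by \eqref{original HJB3} forces $\alpha_{i_0}-f_{i_0}'(x_{i_0},\mathbf{z})=0$ and makes the max equal $0$; otherwise the first entry already equals $0$. Thus $f$ solves \eqref{induct HJB}. The structural properties demanded by {\bf(S$_{n+1}$)} — the piecewise form \eqref{solution-decomposition}, the region identities \eqref{0-region}, and $f_i(0,\mathbf{z})=0$, $f_i\geq0$, $f_i'>0$, $f_i''\leq0$, $\lim_{x\rightarrow+\infty}f_i(x,\mathbf{z})=+\infty$ — are read off directly from Lemma~\ref{multi-fi}, and by symmetry the same construction covers every default state $\mathbf{z}=0^{j_1,\ldots,j_k}$ with $k=N-n-1$, completing the induction.

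The only genuine obstacle is the bookkeeping of the contagion double sum and the notational consistency it requires: one must check that the coefficient $f_i(\cdot,\mathbf{z}^l)$ appearing in \eqref{original HJB3} is indeed the $i$-th component of the already-constructed lower-level solution $f(\cdot,\mathbf{z}^l)$ supplied by {\bf(S$_n$)}, so that the interchange of summation above is legitimate. All the genuinely analytic difficulty — existence of the one-dimensional solution, the smooth-fit free-boundary point $m_i(\mathbf{z})$, and the signs of the derivatives — has already been absorbed into Lemma~\ref{multi-fi} and Proposition~\ref{THHJB}.
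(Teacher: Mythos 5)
Your proposal is correct and follows essentially the same route as the paper's proof: substitute the separated ansatz into $\mathcal{L}^{\mathbf{z}}$, observe that it reduces to the sum of the one-dimensional operators $\mathcal{A}^{\mathbf{z},i}$ applied to the auxiliary solutions from Lemma~\ref{multi-fi}, regroup the contagion double sum by surviving index, and then run the same case analysis on the maximum. Your added remarks on the vanishing of the mixed derivatives and the legitimacy of the summation interchange are details the paper leaves implicit, but they do not constitute a different argument.
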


\begin{proof}
It suffices to investigate the $C^2$ solution of the variational inequality \eqref{induct HJB}. Let $f$ be the function defined in \eqref{4.1solu}. It is then obvious that $f$ is $C^2$. In view of \eqref{original HJB3}, we have
\begin{align*}
\mathcal{L}^{\mathbf{z}}f(\mathbf{x},\mathbf{z})+\sum_{i=N-n}^N\left(\sum_{l\neq i}\lambda_l(\mathbf{z})f_{i}(x_{i},\mathbf{z}^l)\right)&=\sum_{i=N-n}^N\mathcal{A}^{\mathbf{z},i}f_{i}(x_i,\mathbf{z})+\sum_{i=N-n}^N\left(\sum_{l\neq i}\lambda_l(\mathbf{z})f_{i}(x_{i},\mathbf{z}^l)\right)\notag\\
&=\sum_{i=N-n}^N\left(\mathcal{A}^{\mathbf{z},i}f_{i}(x_i,\mathbf{z})+\left(\sum_{l\neq i}\lambda_l(\mathbf{z})f_{i}(x_{i},\mathbf{z}^l)\right)\right)\leq0.
\end{align*}
Furthermore, $\alpha_i-\partial_if(\mathbf{x},\mathbf{z})=\alpha_i-f'_{i}(x_i,\mathbf{z})\leq0$, $i=N-n,\ldots,N$. It follows that
\begin{align}\label{maxleq0}
\max_{N-n\leq i\leq N}\left\{\mathcal{L}^{\mathbf{z}}f(\mathbf{x},\mathbf{z})+\sum_{i=N-n}^N\left(\sum_{l\neq i}\lambda_l(\mathbf{z})f_{i}(x_{i},\mathbf{z}^l)\right),\alpha_i-\partial_if(\mathbf{x},\mathbf{z})\right\}\leq0.
\end{align}
Now we claim that
\begin{align*}
\max_{N-n\leq i\leq N}\left\{\mathcal{L}^{\mathbf{z}}f(\mathbf{x},\mathbf{z})+\sum_{i=N-n}^N\left(\sum_{l\neq i}\lambda_l(\mathbf{z})f_{i}(x_{i},\mathbf{z}^l)\right),\alpha_i-\partial_if(\mathbf{x},\mathbf{z})\right\}=0.
\end{align*}
Fix $x_i\geq 0$, $N-n\leq i\leq N$ and $\mathbf{z}=0^{1,\ldots, N-n-1}$. If $\mathcal{L}^{\mathbf{z}}f(\mathbf{x},\mathbf{z})+\sum_{i=N-n}^N\left(\sum_{l\neq i}\lambda_l(\mathbf{z})f_{i}(x_{i},\mathbf{z}^l)\right)=0$, then the equality trivially holds. If $\mathcal{L}^{\mathbf{z}}f(\mathbf{x},\mathbf{z})+\sum_{i=N-n}^N\left(\sum_{l\neq i}\lambda_l(\mathbf{z})f_{i}(x_{i},\mathbf{z}^l)\right)<0$ , it follows that $\mathcal{A}^{\mathbf{z},i}f_{i}(x_i,\mathbf{z})+\left(\sum_{l\neq i}\lambda_l(\mathbf{z})f_{i}(x_{i},\mathbf{z}^l)\right)<0$, for some $i$. As $f_{i}$ is chosen to solve \eqref{original HJB3}, it holds that $\alpha_i-\partial_if(\mathbf{x},\mathbf{z})=\alpha_i-f'_{i}(x_i,\mathbf{z})=0$. Therefore, our claim holds that $f(\mathbf{x},\mathbf{z})$ is the $C^2$ solution to the variational inequality \eqref{induct HJB}. Moreover, for $x\geq0$, we have by Lemma \ref{multi-fi} that $f_{i}(x,\mathbf{z})$ defined in \eqref{fzi expression1} satisfies $f_{i}(0,\mathbf{z})=0$, $f'_{i}(x,\mathbf{z})>0$, $f''_{i}(x,\mathbf{z})\leq0$ and $\lim_{x\rightarrow+\infty}f_i(x,\mathbf{z})=+\infty$. Meanwhile, \eqref{HJBVI-specific1} in Lemma \ref{multi-fi} yields the desired property in \eqref{0-region}.

Given the results above, we conclude that, for $\mathbf{z}=0^{1,\ldots,N-n-1}$, HJBVI \eqref{original HJB1} has a solution $f(\mathbf{x},\mathbf{z})$, which admits the form in \eqref{solution-decomposition} and \eqref{0-region}. This completes the proof of the statement {\bf(S$_{n+1}$)}.
\end{proof}

By mathematical induction, we can present the following main result.
\begin{thm}\label{induction-result}
Statement {\bf(S$_N$)} is true. In particular, for $0\leq k\leq N$ and $\mathbf{z}=0^{1,\ldots,k}$, the recursive system of HJBVI \eqref{original HJB1} admits a $C^2$ solution in the separation form of
\begin{align}\label{multi-solu-f}
  f(\mathbf{x},\mathbf{z})=\sum_{i=k+1}^Nf_{i}(x_i,\mathbf{z}),
\end{align}
where each $f_{i}(x,\mathbf{z})$ is defined in \eqref{fzi expression1}, with $n=N-1$, i.e., $f_{i}(x,\mathbf{z})$ is the solution to the auxiliary variational inequality \eqref{original HJB3} and satisfies \eqref{HJBVI-specific1}, $k+1\leq i\leq N$.
\end{thm}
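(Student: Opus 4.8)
The plan is to obtain the theorem as a single application of the principle of mathematical induction, with all of the analytic work already discharged in the lemmas of this section. Observe first that statement {\bf(S$_N$)} is exactly the assertion of the theorem: taking $n=N$ in the definition of {\bf(S$_n$)} covers every index $k$ in the range $0\leq k\leq N$, hence every default state $\mathbf{z}=0^{j_1,\ldots,j_k}$, and the separation form demanded in {\bf(S$_N$)} is precisely \eqref{multi-solu-f}. Thus it suffices to prove {\bf(S$_n$)} for all $1\leq n\leq N$.

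For the base of the induction, I would record that {\bf(S$_1$)} and {\bf(S$_2$)} have already been verified: the explicit formula \eqref{func1} together with Proposition \ref{THHJB} yields the single-surviving-subsidiary case, while the decomposition \eqref{fzi expression} and Theorem \ref{soltwo} yield the two-surviving-subsidiary case. For the inductive step, I would invoke Lemma \ref{n comp HJB}, which shows for each fixed $n$ with $1\leq n<N$ that {\bf(S$_n$)} implies {\bf(S$_{n+1}$)}: it constructs, for the ordered state $\mathbf{z}=0^{1,\ldots,N-n-1}$, a $C^2$ solution $f(\mathbf{x},\mathbf{z})=\sum_{i=N-n}^N f_i(x_i,\mathbf{z})$ of the required form, with the one-dimensional pieces $f_i$ produced by Lemma \ref{multi-fi}. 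Symmetry among the subsidiaries then promotes this construction from the ordered state to every state $\mathbf{z}=0^{j_1,\ldots,j_k}$ with $k=N-n-1$, which is exactly what {\bf(S$_{n+1}$)} requires. Chaining the base cases with the inductive step gives {\bf(S$_n$)} for all $n$, and in particular {\bf(S$_N$)}; reading it off for each $0\leq k\leq N$ at the ordered state $\mathbf{z}=0^{1,\ldots,k}$ delivers \eqref{multi-solu-f}, with each $f_i(x,\mathbf{z})$ of the form \eqref{fzi expression1}, solving the auxiliary inequality \eqref{original HJB3} and satisfying \eqref{HJBVI-specific1}.

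The substantive difficulty lives entirely in the inductive step and has already been resolved upstream, so at the level of this theorem the work is bookkeeping. The one point that must hold for the recursion to close is that the value function coming from the $k$-surviving-subsidiary stage enters the $(k+1)$-stage auxiliary problem \eqref{original HJB3} as a legitimate coefficient, namely that the aggregate $\sum_{l\neq i}\lambda_l(\mathbf{z})f_i(x,\mathbf{z}^l)$ inherits the structural properties required of the function $h$ in Proposition \ref{THHJB}: vanishing at the origin, nonnegative, strictly increasing, concave, and tending to $+\infty$. This is precisely the content delivered by the induction hypothesis through Lemma \ref{multi-fi}. Hence the proof of the present theorem reduces to confirming that the base cases and the inductive step align and that the single value $n=N$ exhausts the full range $0\leq k\leq N$.
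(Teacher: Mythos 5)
Your proposal is correct and follows essentially the same route as the paper: the base cases {\bf(S$_1$)}, {\bf(S$_2$)} come from \eqref{func1}, \eqref{fzi expression}, Proposition \ref{THHJB} and Theorem \ref{soltwo}, and the inductive step is exactly Lemma \ref{n comp HJB} (with Lemma \ref{multi-fi} supplying the structural properties of the inhomogeneous term), extended to general default states by symmetry. Your added observation that the recursion closes because $\sum_{l\neq i}\lambda_l(\mathbf{z})f_i(x,\mathbf{z}^l)$ satisfies the hypotheses on $h$ is precisely the point the paper relies on.
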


\begin{rem}\label{value_function&Sigma}
It can be observed from \eqref{original HJB3} that each function $f_i(x_i,\mathbf{z})$ in the separation form \eqref{4.1solu} is actually independent of the correlation coefficient matrix $\Sigma$. Therefore, the solution $f(\mathbf{x},\mathbf{z})$ to the recursive system of HJBVI \eqref{original HJB1}, for $0\leq k\leq N$, is also independent of the correlation coefficient matrix $\Sigma=(\rho_{ij})_{N\times N}$.
\end{rem}

\section{Proof of Verification Theorem}\label{sec:verf}

In this section, we construct the optimal dividend strategy using the $C^2$ solution of the recursive system HJBVI \eqref{original HJB1} and complete the proof of the main theorem.

\begin{proof}[Proof of Theorem \ref{mainthm1}]\ \\
Thanks to Theorem \ref{induction-result}, we can readily conclude that variational inequality \eqref{varineqNc} for the case $k=0$ (i.e. $\mathbf{z}=\mathbf{0}$ and $N$ subsidiaries are alive) also admits the $C^2$ solution in the separation form \eqref{multi-solu-f}. Moreover, as statement {\bf(S$_N$)} holds, the existence of mapping $m_{j_i}(\mathbf{z}):\{0,1\}^N\mapsto(0,+\infty)$ is also guaranteed, for any $\mathbf{z}=0^{j_1,\ldots,j_k}$, $1\leq i\leq k$ as well as $\mathbf{z}=\mathbf{0}$.

Let $\tau$ be an arbitrary stopping time, and $\mathbf{D}(t)=(D_1(t),\ldots,D_N(t))$ be an arbitrary admissible strategy. By using It\^{o}'s formula, we first get
\begin{align}\label{itoN}
&\sum_{i=1}^N\alpha_i\int_0^\tau e^{-rs}dD_i(s)+e^{-r\tau}f\left(\mathbf{X}(\tau),\mathbf{Z}(\tau)\right)-f(\mathbf{x},\mathbf{z})\notag\\
=&\int_0^{\tau}e^{-rs}\left[\mathcal{L}^{\mathbf{Z}(s)}f(\mathbf{X}(s),\mathbf{Z}(s))+\sum_{l=k+1}^N\lambda_l(\mathbf{Z}(s))f(\mathbf{X}^{(l)}(s),\mathbf{Z}^{l}(s))\right]ds\notag\\
&+\sum_{i=1}^N\int_0^{\tau}e^{-rs}\left[\alpha_i-\partial_if(\mathbf{X}(s), \mathbf{Z}(s))\right]dD^c_i(s)\notag\\
&+\sum_{0<s\leq\tau,\Delta Z(s)\neq0}e^{-rs}\sum_{j=1}^N\Delta Z_j(s)\bigg[f\left(\mathbf{X}^{(j)}(s-)-\Delta \mathbf{D}^j(s),\mathbf{Z}^j(s-)\right)-f\left(\mathbf{X}^{(j)}(s-),\mathbf{Z}^j(s-)\right)\notag\\
&+\sum_{\substack{i=1\\i\neq j}}^N\alpha_i\Delta D_i(s)\bigg]+\sum_{0<s\leq\tau,\Delta Z(s)=0}e^{-rs}\Bigg[f\left(\mathbf{X}(s)-\Delta \mathbf{D}(s),\mathbf{Z}(s-)\right)-f\left(\mathbf{X}(s-),\mathbf{Z}(s-)\right)\notag\\
&+\sum_{i=1}^N\alpha_i\Delta D_i(s)\Bigg]+\mathcal{M}_\tau\notag\\
=:&I+II+III+IV+\mathcal{M}_\tau.
\end{align}
As $f$ solves \eqref{original HJB1}, we have that $I,II,IV\leq0$. Moreover, by noting that $f(\mathbf{x},\mathbf{z}^j)$ also solves \eqref{original HJB1}, we deduce that $III\leq0$. Note that $\mathcal{M}_{t\wedge\tau}$ is a local martingale. There exists a sequence of stopping times $\{T_n\}_{n=1}^{\infty}$ satisfying $T_n\uparrow \infty$, and
\begin{align}\label{verification-leq}
&\mathbb{E}\left[\sum_{i=1}^{N} \alpha_i \int_0^{\tau} e^{-rs}dD_i(s)\right]\notag\\
\leq& \lim_{n\rightarrow\infty}\mathbb{E}\left[ \sum_{i=1}^{N} \alpha_i \int_0^{\tau\wedge T_n} e^{-rs}dD_i(s)+e^{-r(\tau\wedge T_n)}f(\mathbf{X}(\tau\wedge T_n),\mathbf{Z}(\tau\wedge T_n)) \right]\notag\\
\leq & f(\mathbf{x},\mathbf{z})+\lim_{n\rightarrow\infty}\mathbb{E}[\mathcal{M}_{\tau\wedge T_n}]=f(\mathbf{x},\mathbf{z}).
\end{align}

In view that $\mathbf{D}(t)$ is arbitrary, we obtain by sending $\tau$ in \eqref{verification-leq} to $+\infty$ that
\begin{align}\label{verification-leq1}
  \sup_{D} J(\mathbf{x},\mathbf{z},\mathbf{D})\leq f(\mathbf{x},\mathbf{z}).
\end{align}
Let us continue to prove that ``$=$'' holds in \eqref{verification-leq1}. Consider the c\`{a}dl\`{a}g strategy
\begin{align*}
D^*_i(t)&:=\max\left\{0,\sup_{0\leq s\leq t}\left\{\wdt X_i(s)-m_i\left(\mathbf{Z}(s)\right)\right\}\right\},\notag\\
X^*_i(t)&=\wdt X_i(t)-D^*_i(t).
\end{align*}
We set $A_i(t):=\mathbf{1}_{\left\{D^*_i(t)=\wdt X_i(t)-m_i\left(\mathbf{Z}(t)\right)\right\}}$. It follows that
\begin{align}
\label{region1}X^*_i(t)&=\wdt X_i(t)-D^*_i(t)\leq m_i\left(\mathbf{Z}(t)\right),\\
dD^*_i(t)&=A_i(t)dD^*_i(t).
\end{align}
On $\left\{D^*_i(t)=\wdt X_i(t)-m_i\left(\mathbf{Z}(t)\right)\right\}$, we have that
\begin{align*}
X^*_i(t)=\tilde{X}_i(t)-D^*_i(t)=m_i\left(\mathbf{Z}(t)\right),
\end{align*}
and vise versa. It then follows that
\begin{align*}
dD^*_i(t)=A_i(t)dD^*_i(t)=\mathbf{1}_{\left\{X^*_i(t)=m_i\left(\mathbf{Z}(t)\right)\right\}}dD^*_i(t).
\end{align*}
Furthermore, we have on $\left\{X^*_i(t)=m_i\left(\mathbf{Z}(t)\right)\right\}$ that
\begin{align}\label{jumpsize}
X^*_i(t-)=X^*_i(t)+\Delta D^*_i(t)\geq X^*_i(t)=m_i\left(\mathbf{Z}(t)\right).
\end{align}
In view of \eqref{region1}, \eqref{0-region}, we have that
\begin{align}\label{optimal1}
\mathcal{L}^{\mathbf{Z}(s)}f(\mathbf{X}^*(s),\mathbf{Z}(s))+\sum_{l=k+1}^N\lambda_l(\mathbf{Z}(s))f((\mathbf{X}^*)^{(l)}(s),\mathbf{Z}^{l}(s))=0.
\end{align}
Note that for $x_i\geq m_i\left(\mathbf{z}\right)$, $\partial_if(\mathbf{x},\mathbf{z})=f'_i(x_i, \mathbf{z})=\alpha_i$. Hence, it holds that $\partial_if(\mathbf{X}^*(s),\mathbf{Z}(s))=\alpha_i$ on $\left\{X^*_i(t)=m_i\left(\mathbf{Z}(t)\right)\right\}$, which then entails that
\begin{align}\label{optimal2}
&\sum_{i=1}^N\int_0^{\tau}e^{-rs}\left[\alpha_i-\partial_if(\mathbf{X}^*(s), \mathbf{Z}(s))\right](D^*_i)^c(s)\\
=&\sum_{i=1}^N\int_0^{\tau}e^{-rs}\left[\alpha_i-\partial_if(\mathbf{X}^*(s), \mathbf{Z}(s))\right]\mathbf{1}_{\left\{X^*_i(t)=m_i\left(\mathbf{Z}(t)\right)\right\}}d(D^*_i)^c(s)=0.
\end{align}
By virtue of \eqref{jumpsize}, we can see that whenever $\Delta D^*_i(s)\neq0$, it holds that $X^*_i(s-)>X^*_i(s-)-\Delta D^*_i(s)=X^*_i(s)=m_i\left(\mathbf{Z}(s)\right)$. By using the fact that $\partial_if(\mathbf{x},\mathbf{z})=f'_i(x_i,\mathbf{z})=\alpha_i$, for $x_i\geq m_i\left(\mathbf{z}\right)$, again, we obtain that
{\small
\begin{align}\label{optimal3}
&\sum_{j=1}^N\Delta Z_j(s)\bigg[f\left((\mathbf{X}^*)^{(j)}(s-)-\Delta (\mathbf{D}^*)^{(j)}(s),\mathbf{Z}^j(s-)\right)-f\left((\mathbf{X}^*)^{(j)}(s-),\mathbf{Z}^j(s-)\right)+\sum_{\substack{i=1\\i\neq j}}^N\alpha_i\Delta D^*_i(s)\bigg]\notag\\
=&\sum_{j=1}^N\Delta Z_j(s)\bigg[f\left((\mathbf{X}^*)^{(j)}(s-)-\Delta (\mathbf{D}^*)^{(j)}(s),\mathbf{Z}(s)\right)-f\left((\mathbf{X}^*)^{(j)}(s-),\mathbf{Z}(s)\right)+\sum_{\substack{i=1\\i\neq j}}^N\alpha_i\Delta D^*_i(s)\bigg]\notag\\
=&0.
\end{align}}Similarly, we obtain the equality that
\begin{align}\label{optimal4}
&\quad\sum_{0<s\leq\tau,\Delta Z(s)=0}e^{-rs}\left[f\left(\mathbf{X}^*(s-)-\Delta \mathbf{D}^*(s),\mathbf{Z}(s-)\right)-f\left(\mathbf{X}^*(s-),\mathbf{Z}(s-)\right)+\sum_{i=1}^N\alpha_i\Delta D^*_i(s)\right]\notag\\
&=\sum_{0<s\leq\tau,\Delta Z(s)=0}e^{-rs}\left[f\left(\mathbf{X}^*(s-)-\Delta \mathbf{D}^*(s),\mathbf{Z}(s)\right)-f\left(\mathbf{X}^*(s-),\mathbf{Z}(s)\right)+\sum_{i=1}^N\alpha_i\Delta D^*_i(s)\right]\notag\\
&=0.\notag\\
\end{align}
Putting all the pieces together, we conclude from \eqref{itoN} and \eqref{optimal1}-\eqref{optimal4} that
\begin{align}\label{optimal8}
\sum_{i=1}^N\alpha_i\int_0^\tau e^{-rs}dD^*_i(s)+e^{-r\tau}f\left(\mathbf{X}^*(\tau),\mathbf{Z}(\tau)\right)-f(\mathbf{x},\mathbf{z})=\mathcal{M}_{\tau},\quad\tau\geq0,
\end{align}
where $\mathcal{M}_{\tau}$ is a local martingale. Hence, there exists a sequence of stopping times $\{T_n\}_{n=1}^\infty$ satisfying $T_n\uparrow\infty$, and
\begin{align}\label{optimal5}
&\mathbb{E}\left[\sum_{i=1}^N\alpha_i\int_0^{\tau\wedge T_n} e^{-rs}dD^*_i(s)+e^{-r({\tau\wedge T_n})}f\left(\mathbf{X}^*(\tau\wedge T_n),\mathbf{Z}(\tau\wedge T_n)\right)\right]-f(\mathbf{x},\mathbf{z})\notag\\
=&\mathbb{E}\left[\mathcal{M}_{\tau\wedge T_n}\right]=0.
\end{align}
In view of \eqref{region1}, we have $0\leq X^*_i(\tau)\leq m_i\left(\mathbf{Z}(\tau)\right)$, $\tau\geq0$, which entails that $X^*_i(\tau)$ is a bounded process. It follows that $f\left(\mathbf{X}^*(\tau),\mathbf{Z}(\tau)\right)$ is also bounded. Note that
\begin{align*}
\lim_{n\rightarrow\infty}e^{-r(\tau\wedge T_n)}f\left(\mathbf{X}^*(\tau\wedge T_n),\mathbf{Z}(\tau\wedge T_n)\right)=e^{-r\tau}f\left(\mathbf{X}^*(\tau),\mathbf{Z}(\tau)\right)\quad\text{a.s..}
\end{align*}
By passing the limit in \eqref{optimal5}, we arrive at
\begin{align}\label{optimal6}
\mathbb{E}\left[\sum_{i=1}^N\alpha_i\int_0^{\tau} e^{-rs}dD^*_i(s)+e^{-r\tau}f\left(\mathbf{X}^*(\tau),\mathbf{Z}(\tau)\right)\right]-f(\mathbf{x},\mathbf{z})=0.
\end{align}
Note that $\lim_{\tau\rightarrow+\infty}e^{-r\tau}f\left(\mathbf{X}^*(\tau),\mathbf{Z}(\tau)\right)=0$ a.s.. Sending $\tau$ to  $+\infty$ in \eqref{optimal6} yields that
\begin{align}\label{optimal7}
\mathbb{E}\left[\sum_{i=1}^N\alpha_i\int_0^{\tau_i} e^{-rs}dD^*_i(s)\right]-f(\mathbf{x},\mathbf{z})=0,
\end{align}
which completes the proof.
\end{proof}
\begin{rem}\label{rem5-1}
Similar to the derivation of \eqref{optimal8}, for $i=1,\ldots,N$, if we extend the definition of $f_i$ in such a way that $f_i(x_i,\mathbf{z})=0$ whenever the $i$-th component of $\mathbf{z}$ is $1$, then, following the proof of Theorem \ref{mainthm1} and using \eqref{original HJB3}, we can show
\begin{align*}
\alpha_i\int_0^\tau e^{-rs}dD^*_i(s)+e^{-r\tau}f_i\left(X^*_i(\tau),\mathbf{Z}(\tau)\right)-f_i(x_i,\mathbf{z})=\widetilde{\mathcal M}^{(i)}_{\tau},\quad i=1,\ldots,N,
\end{align*}
where $\widetilde{\mathcal M}^{(i)}_{\tau}$ are local martingales, for $x_i\in[0,+\infty)$, $i=1,\ldots,N,$ and $\mathbf{z}=\mathbf{0}$. In the same fashion to obtain \eqref{optimal7}, one can also get
\begin{align*}
\mathbb{E}\left[\alpha_i\int_0^{\tau_i} e^{-rs}dD^*_i(s)\right]-f_i(x_i,\mathbf{z})=0,\quad i=1,\ldots,N.
\end{align*}
This equality implies a natural linear separation form of $f(\mathbf{x},\mathbf{z})$ in \eqref{4.1solu} because we can see that
\begin{align*}
f(\mathbf{x},\mathbf{z})={\mathbb E}\left(\sum^N_{i=1}\alpha_i\int_0^{\tau_i}e^{-rt} dD^*_i(t)\right)=\sum^N_{i=1}{\mathbb E}\left[\alpha_i\int_0^{\tau_i}e^{-rt} dD^*_i(t)\right],
\end{align*}
and each $f_i(x_i,\mathbf{z})$ stands for the expected value that $f_i(x_i,\mathbf{z})={\mathbb E}\left[\alpha_i\int_0^{\tau_i}e^{-rt} dD^*_i(t)\right]$ given the optimal dividend policy $D_i^*$ for the subsidiary i.  However, we also point out that $D_i^*$ is the i-th component of the optimal control $\mathbf{D}^*$ which solves the group dividend problem. One can not simply interpret that $f_i(x_i,\mathbf{z})$ is the value function or $D_i^*$ is the optimal control when we purely solve a dividend optimization problem for the single subsidiary i without taking account all other subsidiaries. The vector process $\mathbf{D}^*$ is the solution that is optimal for a whole group and it has a coupled nature because the variational inequality \eqref{original HJB3} or the solution form \eqref{fzi expression1} for each $f_i(x_i,\mathbf{z})$ depends on the default intensities of all surviving subsidiaries and the value functions given that one more subsidiary has defaulted.
\end{rem}

\section{Conclusions}
We formulate and investigate an optimal dividend problem for a multi-line insurance group. Each subsidiary within the group runs a product line and all subsidiaries are exposed to some external contagious default risk. By using the backward recursive scheme and the smooth-fit principle, the associated recursive system of HJBVIs is studied and the value function of the expected total dividend is proved to be its classical solution that has a separation form. We verify that the optimal dividend fits the type of barrier control and the barrier for each surviving subsidiary is dynamically modulated by the default state.

Some future research can be conducted along different directions. Firstly, one can consider the more general model of $\hat{X}_i$ with jumps such as the classical Cram\'er-Lundberg model or other jump-diffusion models. Secondly, we note that the real life default events from credit assets can hurt the surplus management but may not lead to domino bankruptcies of subsidiaries due to strict regulations of the whole insurance sector. It is more realistic to consider the problem when $Z_i(t)$ can take values in $[0,1]$ so that the default event only leads to a large size downward jump of the surplus process and certain recovery rate can be incorporated. Moreover, the default intensity $\lambda_i\left(\mathbf{Z}(t), X_i(t)\right)$ of $Z_i(t)$ may also depend on the surplus level $X_i(t)$ of the $i$-th subsidiary to depict the situation that a larger surplus level guarantees a smaller default probability. The inclusion of these factors will complicate the analysis of HJBVIs significantly because the backward induction can not be applied in a simple way and it is an open problem whether the optimal dividend of each subsidiary is still of the barrier type. It will be interesting to study these model extensions by applying some distinctive PDE arguments. Another appealing future work is to accommodate the collaborating bail-out dividend (see \cite{AlbrecherAM}, \cite{GuSZ} and \cite{Grandits}) in the present setting with contagious default risk so that each subsidiary can perform capital injection to other subsidiaries within the group whenever their financial ruins or credit default events happen.

\appendix
\section{Appendix: Derivation of \eqref{original HJB}}\label{appx}

For the default process starting from $\mathbf{Z}(0)=\mathbf{z}=(0,0)$, we present here the argument to derive the associated HJBVI using It\^{o}'s lemma. For a given function $\psi(\cdot,\mathbf{z})\in C^2(\mathbb{R}^2)$, let us rewrite
\begin{align}\label{ito}
&\alpha_1\int_0^\tau e^{-rs}dD_1(s)+\alpha_2\int_0^\tau e^{-rs}dD_2(s)+e^{-r\tau}\psi\left(\mathbf{X}(\tau),\mathbf{Z}(\tau)\right)-\psi(\mathbf{x},\mathbf{z})\notag\\
&=\int_0^{\tau}e^{-rs}\tilde{\mathcal{L}}^{(0,0)}\psi(s)ds+\int_0^{\tau}e^{-rs}\left[\alpha_1-\partial_1\psi(s)\right]dD^c_1(s)+\int_0^{\tau}e^{-rs}\left[\alpha_2-\partial_2\psi(s)\right]dD^c_2(s)\notag\\
&+\alpha_1\int_0^\tau e^{-rs}dD_1(s)+\alpha_2\int_0^\tau e^{-rs}dD_2(s)\notag\\
&+\sum_{0<s\leq\tau}e^{-rs}\left[\psi\left(\mathbf{X}(s),\mathbf{Z}(s)\right)-\psi\left(\mathbf{X}(s-),\mathbf{Z}(s-)\right)\right]+\mathcal{M}_\tau\notag\\
&=\int_0^{\tau}e^{-rs}\tilde{\mathcal{L}}^{(0,0)}\psi(s)ds+\int_0^{\tau}e^{-rs}\left[\alpha_1-\partial_1\psi(s)\right]dD^c_1(s)+\int_0^{\tau}e^{-rs}\left[\alpha_2-\partial_2\psi(s)\right]dD^c_2(s)\notag\\
&+\alpha_1\int_0^\tau e^{-rs}dD_1(s)+\alpha_2\int_0^\tau e^{-rs}dD_2(s)\notag\\
&+\sum_{0<s\leq\tau,\Delta Z(s)\neq0}e^{-rs}\left[\psi\left(\mathbf{X}(s),\mathbf{Z}(s)\right)-\psi\left(\mathbf{X}(s-),\mathbf{Z}(s-)\right)\right]\notag\\
&+\sum_{0<s\leq\tau,\Delta Z(s)=0}e^{-rs}\left[\psi\left(\mathbf{X}(s)+\Delta D(s),\mathbf{Z}(s-)\right)-\psi\left(\mathbf{X}(s-),\mathbf{Z}(s-)\right)\right]+\mathcal{M}_\tau\notag\\
&=\int_0^{\tau}e^{-rs}\mathcal{L}^{(0,0)}\psi(s)ds+\int_0^{\tau}e^{-rs}\left[\alpha_1-\partial_1\psi(s)\right]dD^c_1(s)+\int_0^{\tau}e^{-rs}\left[\alpha_2-\partial_2\psi(s)\right]dD^c_2(s)\notag\\
&+\sum_{0<s\leq\tau,\Delta Z(s)\neq0}e^{-rs}\Delta Z_1(s)\left[\psi\left(0,X_2(s-)-\Delta D_2(s),(1,0)\right)-\psi\left(\mathbf{X}(s-),(0,0)\right)+\alpha_2\Delta D_2(s)\right]\notag\\
&+\sum_{0<s\leq\tau,\Delta Z(s)\neq0}e^{-rs}\Delta Z_2(s)\left[\psi\left(X_1(s-)-\Delta D_1(s),0,(0,1)\right)-\psi\left(\mathbf{X}(s-),(0,0)\right)+\alpha_1\Delta D_1(s)\right]\notag\\
&+\sum_{0<s\leq\tau,\Delta Z(s)=0}e^{-rs}\Big[\psi\left(\mathbf{X}(s)-\Delta D(s),\mathbf{Z}(s-)\right)-\psi\left(\mathbf{X}(s-),\mathbf{Z}(s-)\right)\notag\\
&+\alpha_1\Delta D_1(s)+\alpha_2\Delta D_2(s)\Big]+\mathcal{M}_\tau,
\end{align}
where $\mathcal{M}_\tau$ is a local martingale.

Let us turn to the jump terms. According to assumptions that no simultaneous jumps can occur in the sense of \eqref{simul_jump_assumption1} and \eqref{simul_jump_assumption2}, it follows that
\begin{align*}
\Delta Z_1(s)\Delta D_1(s)=\Delta Z_2(s)\Delta D_2(s)=\Delta Z_1(s)\Delta Z_2(s)=0.
\end{align*}

On $\left\{\Delta Z(s)\neq0\right\}$, let us consider $\mathbf{Z}(s-)=(0,0)$. We have
\begin{align*}
&e^{-rs}\left[\psi\left(\mathbf{X}(s),\mathbf{Z}(s)\right)-\psi\left(\mathbf{X}(s-),\mathbf{Z}(s-)\right)\right]\notag\\
=&e^{-rs}\Delta Z_1(s)\left[\psi\left((0,X_2(s-)-\Delta D_2(s)),(1,0)\right)-\psi\left(\mathbf{X}(s-),(0,0)\right)\right]\notag\\
&+e^{-rs}\Delta Z_2(s)\left[\psi\left((X_1(s-)-\Delta D_1(s),0),(0,1)\right)-\psi\left(\mathbf{X}(s-),(0,0)\right)\right],
\end{align*}
as well as
\begin{align*}
&e^{-rs}\Delta Z_1(s)\left[\psi\left((0,X_2(s-)-\Delta D_2(s)),(1,0)\right)-\psi\left(\mathbf{X}(s-),(0,0)\right)\right]\notag\\
=&e^{-rs}\Delta Z_1(s)\left[\psi\left((0,X_2(s-)-\Delta D_2(s)),(1,0)\right)-\psi\left(0,X_2(s-),(1,0)\right)\right]\notag\\
&+e^{-rs}\Delta Z_1(s)\left[\psi\left((0,X_2(s-)),(1,0)\right)-\psi\left(\mathbf{X}(s-),(0,0)\right)\right].
\end{align*}
Similarly, one can get
\begin{align*}
&e^{-rs}\Delta Z_2(s)\left[\psi\left((X_1(s-)-\Delta D_1(s),0),(0,1)\right)-\psi\left(\mathbf{X}(s-),(0,0)\right)\right]\notag\\
=&e^{-rs}\Delta Z_2(s)\left[\psi\left((X_1(s-)-\Delta D_1(s),0),(0,1)\right)-\psi\left(X_1(s-),0,(0,1)\right)\right]\notag\\
&+e^{-rs}\Delta Z_2(s)\left[\psi\left((X_1(s-),0),(0,1)\right)-\psi\left(\mathbf{X}(s-),(0,0)\right)\right].
\end{align*}
On $\left\{\Delta Z(s)=0\right\}$, we have
\begin{align*}
&e^{-rs}\left[\psi\left(\mathbf{X}(s),\mathbf{Z}(s)\right)-\psi\left(\mathbf{X}(s-),\mathbf{Z}(s-)\right)\right]\notag\\
=&e^{-rs}\left[\psi\left(\mathbf{X}(s-)-\Delta D(s),\mathbf{Z}(s-)\right)-\psi\left(\mathbf{X}(s-),\mathbf{Z}(s-)\right)\right],
\end{align*}
and also
\begin{align*}
\alpha_i\int_0^\tau e^{-rs}dD_i(s)=\sum_{0<s\leq\tau,\Delta Z_2(s)\neq0}\alpha_i e^{-rs}\Delta D_i(s)+\sum_{0<s\leq\tau,\Delta Z_2(s)=0}\alpha_i e^{-rs}\Delta D_i(s).
\end{align*}

Thanks to the martingale property in \eqref{Z_imart} and the fact that, for any $h\in C^1({\mathbb R})$ and $y\in{\mathbb R}$,
\begin{align*}
  h(y-\Delta D_i(s))-h(y)=-\int_0^{\Delta D_i(s)}h'(y-u)du,\notag
\end{align*}
we obtain the desired HJBVI \eqref{original HJB}.\\
\ \\
\textbf{Acknowledgement} X. Yu is supported by Hong Kong Early Career Scheme under no. 25302116 and The Hong Kong Polytechnic University internal grant under no. P0031417.

\ \\
\bibliographystyle{siam}
\bibliography{JLYY-1}

\end{document}